\documentclass[conference]{IEEEtran}

\PassOptionsToPackage{table}{xcolor}
\usepackage{cite}
\usepackage{url}
\usepackage[table]{xcolor}
\usepackage{colortbl}
\usepackage{paralist}
\usepackage[utf8]{inputenc} 
\usepackage{fixltx2e} 
\usepackage[T1]{fontenc} 
\usepackage{xurl}
\usepackage{hyperref}
\usepackage{graphicx}
\usepackage{tikz,tabularx} 
\usepackage{amsmath}

\usepackage{amssymb}
\usepackage{amsfonts}
\usetikzlibrary{decorations.pathreplacing,chains,fit,shapes,calc,arrows,plotmarks,arrows.meta,shadows,positioning,shapes.arrows,shapes.geometric,shapes.multipart,calligraphy,patterns, patterns.meta}
\usepackage{textcomp}
\usepackage{pgfplots}
\usepackage{svg}
\usepackage[binary-units=true,per-mode=symbol]{siunitx}
\usepackage[nolist]{acronym}
\usepackage{booktabs}
\usepackage{algorithm}
\usepackage{algpseudocode}
\usepackage{textcomp}
\usepackage{listings}
\usepackage{csquotes}
\usepackage{stmaryrd}
\usepackage{caption}
\usepackage{subcaption}
\usepackage{fancyvrb}
\usepackage{fontawesome}
\usepackage{multirow,multicol,csvsimple}
\usepackage[inline]{enumitem}
\usepackage{amsthm}
\newtheorem{definition}{Definition}
\newtheorem{lemma}{Lemma}
\usepackage{pgfplotstable}
\usepackage{dblfloatfix}
\usepackage{afterpage}

\usepackage{tocbibind}
\usepackage{placeins}

\usepackage{pdfpages}
\usepackage[disable]{todonotes}
\usepackage{longtable}
\usepackage{pdflscape}
\usepackage{adjustbox}
\usepackage{xspace}
\usepackage{makecell,rotating}
\usepackage[nameinlink,noabbrev,capitalize]{cleveref}
\usepackage{comment}
\usepackage{soul}
\usepackage{orcidlink}

\newboolean{showcomments}
\setboolean{showcomments}{false}

\ifthenelse{\boolean{showcomments}}
{ \newcommand{\mynote}[3]{
    \protect\fbox{\bfseries\sffamily\scriptsize#1}
    {\small\textsf{\emph{\color{#3}{#2}}}}}}
{ \newcommand{\mynote}[3]{}}

\newcommand\important[1]{\todo[inline]{\textbf{Important:} #1}}
\newcommand\leo[1]{\todo[color=orange,inline]{\textbf{Leo:} #1}}
\newcommand\marcel[1]{\todo[color=green,inline]{\textbf{Marcel:} #1}}

\ifthenelse{\boolean{showcomments}}{}{
    \renewcommand\important[1]{}
    \renewcommand\leo[1]{}
    \renewcommand\marcel[1]{}
}

\newcolumntype{R}[2]{%
    >{\adjustbox{angle=#1,lap=\width-(#2)}\bgroup}%
    l%
    <{\egroup}%
}

\newcommand{\cf}{cf.\@\xspace}

\newcommand{\etal}{\textit{et al.}\@\xspace}
\newcommand{\eg}{\textit{e.g.}\@\xspace}

\newcommand{\ie}{\textit{i.e.}\@\xspace}

\begin{acronym}[derp]
        \acro{AE}[AE]{Asymmetric Extremum}
        \acro{RAM}[RAM]{Rapid Asymmetric Extremum}
        \acro{LMC}[LMC]{Local Maximum Chunking}
        \acro{FBC}[FBC]{Frequency-Based Chunking}
        \acro{BFBC}[BFBC]{Bytes-Frequency-Based Chunking}
        \acro{MII}[MII]{Minimal Incremental Interval}
        \acro{PCI}[PCI]{Parity Check of Interval}
        \acro{MUCH}[MUCH]{Multithreaded Content-Based File Chunking}
        \acro{CDC}[CDC]{Content-Defined Chunking}
        \acro{FSC}[FSC]{Fixed-Size Chunking}
        \acro{BSW}[BSW]{Basic Sliding Window}
        \acro{ipc}[IPC]{instructions per cycle}
        \acro{iqr}[IQR]{interquartile range}
        \acro{simd}[SIMD]{single instruction multiple data}
        \acro{rtt}[RTT]{round-trip time}
        \acro{NC}[NC]{Normalized Chunking}
        \acro{LMC}[LMC]{Local Maximum Chunking}
        \acro{TTTD}[TTTD]{Two-thresholds–two-divisors}
        \acro{RC}[RC]{Regression Chunking}
\end{acronym}

\begin{document}


\title{
Benchmarking Content-Defined Chunking Algorithms for Data Deduplication
}

\author{
Marcel Gregoriadis$^{1}$\orcidlink{0000-0001-5094-0111},
Leonhard Balduf$^{2}$\orcidlink{0000-0002-3519-7160},
Björn Scheuermann$^{2}$\orcidlink{0000-0002-1133-1775},
Johan Pouwelse$^{1}$\orcidlink{0000-0002-9882-1506}\\[0.5em]
{\small $^{1}$Delft University of Technology, Delft, The Netherlands}\\
{\small $^{2}$Technical University of Darmstadt, Darmstadt, Germany}\\
{\small
m.gregoriadis@tudelft.nl,
leonhard.balduf@tu-darmstadt.de,
scheuermann@tu-darmstadt.de,
j.a.pouwelse@tudelft.nl
}
}


\maketitle

\begin{abstract}
    Data deduplication relies on content-defined chunking (CDC) algorithms to decide where data is split.
We observe that most CDC algorithms are modular: a \emph{base chunking algorithm} fixes the byte-level cut-point condition, while \emph{high-level techniques} (size normalization, min/max thresholds, caching) layer on top. Existing benchmarks conflate the two, obscuring which design choices actually drive performance.
We isolate and evaluate eight state-of-the-art base chunking algorithms across four realistic workloads and four metrics: throughput, deduplication ratio, mean chunk size, and chunk-size variance.
Further, we derive new formulas relating algorithm parameters to expected chunk size for AE, RAM, and MII. 
In doing so, we uncover a longstanding error in the widely-used AE parameterization reproduced across multiple papers and evaluation tools that produces chunks 13--39\,\% smaller than intended and inflates AE's reported deduplication.\footnote{The popular benchmark tool DedupBench has since corrected it following our report.}
Across our evaluation, AE is the most balanced base chunking algorithm across all four metrics, RAM's similarity to AE does not carry over to low-entropy data, and MII and PCI show limited advantages over simpler alternatives in storage-oriented workloads. We release all implementations, datasets, and analysis scripts as open source.

%
%
\end{abstract}

\begin{IEEEkeywords}
content-defined chunking, deduplication, storage systems, distributed systems
\end{IEEEkeywords}

\section{Introduction}\label{sec:intro}
Generative AI is driving rapid data growth and increasing the pressure on storage capacity and network bandwidth.
IDC projects global data generation to double between 2025 and 2029, reaching 527 zettabytes annually~\cite{idc2025}.
Yet across the global datasphere most stored data is redundant, with IDC estimating the ratio of unique to replicated data at roughly 1:9~\cite{idc2018}.
This redundancy is highest in backup and archival workloads, including near-identical VM images, container layers, snapshots, and database backups~\cite{peng2025bbs}, where successive versions overlap heavily. 
Chunk-based data deduplication~\cite{lbfs,quinlan2002venti,xia2016comprehensive} makes retaining and synchronizing this history affordable.
Large-scale studies of Microsoft~\cite{meyer2012study,rabinPrimary} and EMC~\cite{wallace2012characteristics} report space savings of up to \qty{83}{\percent}.
Data deduplication has been studied directly in database and primary-storage settings, including online database deduplication and inline deduplication for primary storage~\cite{xu2017online,wildani2013hands}.
At the core of chunk-based storage deduplication sits a chunking algorithm that partitions an incoming byte stream into chunks for redundancy detection.
While partitioning data into fixed-size chunks is simple and efficient, modern deduplication systems employ \emph{content-defined chunking (CDC)} algorithms that place boundaries based on a chunk's \emph{content} rather than its size.
CDC improves resilience to byte insertions and deletions, thereby increasing the amount of redundant content that can be identified~\cite{lbfs,rabinPrimary}.
The choice of the chunking algorithm directly governs both how much redundancy is exposed and how fast data can be ingested.
Over the years, numerous chunking algorithms have been proposed,
claiming higher throughput~\cite{sscdc,udayashankar2024seqcdc,fastcdc,ram,quickcdc},
improved deduplication ratios~\cite{bimodal,fbc,anchorDrivenSubchunk},
and lower chunk-size variance~\cite{fastcdc2,ae2,tttd}.

\begin{figure}[!t]
\begin{tikzpicture}[
    box/.style={
        draw,
        shape=rectangle, 
        font=\scriptsize,
        rounded corners=0pt,
        text width=2.4cm,
        minimum height=0.5cm,
        line width=0.5pt,
        align=center,
        fill=gray!15,
        draw=gray!80,
        pattern color=gray!30
    },
    dots/.style={box, pattern=dots},
    lines/.style={box, pattern=north east lines},
    grid/.style={box, pattern=grid},
    crosshatch/.style={box, pattern=crosshatch},
    bricks/.style={box, pattern=bricks},
    checkerboard/.style={box, pattern=checkerboard},
    fivepointed/.style={box, pattern=fivepointed stars},
    sixpointed/.style={box, pattern=sixpointed stars},
    horizontal/.style={box, pattern=horizontal lines},
    vertical/.style={box, pattern=vertical lines},
    north/.style={box, pattern=north west lines},
    solid/.style={box, fill=gray!20},
    node distance=0cm
]
\node[] (tttd) at (0,0) {TTTD~\cite{tttd}};
\node[box, above=0.1cm of tttd] (rabin1) {\textbf{Rabin}};
\node[box, crosshatch, above=of rabin1] (minmax1) {Min/Max Thresh.};
\node[box, vertical, above=of minmax1] (dual) {Dual-Divisor};
\node[] (fastcdc) at (2.9,0) {FastCDC~\cite{fastcdc}};
\node[box, fill=gray!25, above=0.1cm of fastcdc] (gear) {\textbf{Gear}};
\node[box, crosshatch, above=of gear] (minmax2) {Min/Max Thresh.};
\node[box, north, above=of minmax2] (nc1) {Normalized Ch.};
\node[] (quickcdc) at (5.8,0) {QuickCDC~\cite{quickcdc}};
\node[box, above=0.1cm of quickcdc] (rabin2) {\textbf{Rabin}};
\node[box, crosshatch, above=of rabin2] (minmax3) {Min/Max Thresh.};
\node[box, north, above=of minmax3] (nc2) {Normalized Ch.};
\node[box, dots, above=of nc2] (caching) {Caching};
\end{tikzpicture}
\caption{Most modern chunking algorithms can be decomposed into modular building blocks.
}
\label{fig:composite}
\end{figure}

\begin{figure*}[!tb]
    \centering
    \begin{tikzpicture}[
    doc/.style={draw, minimum height=4em, minimum width=3em, 
                fill=white, 
                double copy shadow={shadow xshift=4pt, 
                             shadow yshift=4pt, fill=white, draw}},
]
\tikzset{
myarrow/.style={
  draw,
  single arrow,
  text width=2.1cm,
  fill=white
  },
}

\tikzset{
chunk/.style={
  draw=black,
  fill=blue,
  fill opacity=0.15,
  text opacity=1,
  minimum width=5mm,
  minimum height=5mm,
  align=center,
  font=\footnotesize
  },
}
\tikzset{
metachunk/.style={
  dashed,
  draw=black,
  fill=blue,
  fill opacity=0.15,
  minimum width=4mm,
  minimum height=4mm
  },
}

   \node[doc] (doc) {Files};
   
   \node[myarrow] at (2.5,0) {\small Chunking \& fingerprinting};
   
   \draw (4.8,0.7) node [chunk,fill=blue] {e3} ;
   \draw (5.4,0.7) node [chunk,fill=orange] {6c} ;
   \draw (6,0.7) node [chunk,fill=blue] {e3} ;
   
   \draw (4.8,0) node [chunk,fill=blue] {e3} ;
   \draw (5.4,0) node [chunk,fill=yellow] {b9} ;
   \draw (6,0) node [chunk,fill=cyan] {a4} ;
   
   \draw (4.8,-0.7) node [chunk,fill=magenta] {39} ;
   \draw (5.4,-0.7) node [chunk,fill=magenta] {39} ;
   \draw (6,-0.7) node [chunk,fill=yellow] {b9} ;
   
   \node[myarrow,text width=1.5cm] at (7.8,0) {\small Indexing \& storing};
   
   \node[] at (10.8,1) {Unique chunks:};
   
   \draw (10.0,0.3) node [chunk,fill=blue] {e3} ;
   \draw (10.6,0.3) node [chunk,fill=orange] {6c} ;
   \draw (11.2,0.3) node [chunk,fill=yellow] {b9} ;
   
   \draw (10.0,-0.4) node [chunk,fill=cyan] {a4} ;
   \draw (10.6,-0.4) node [chunk,fill=magenta] {39} ;
   
   \node[] at (13.35,1) {Metadata:};
   
   \node[] at (13.1,0.5) {\footnotesize \textit{File A:}};
   \draw (13.8,0.3) [metachunk,fill=blue] rectangle ++(0.4,0.4);
   \draw (14.3,0.3) [metachunk,fill=orange] rectangle ++(0.4,0.4);
   \draw (14.8,0.3) [metachunk,fill=blue] rectangle ++(0.4,0.4);
   
   \node[] at (13.1,0) {\footnotesize \textit{File B:}};
   \draw (13.8,-0.2) [metachunk,fill=blue] rectangle ++(0.4,0.4);
   \draw (14.3,-0.2) [metachunk,fill=yellow] rectangle ++(0.4,0.4);
   \draw (14.8,-0.2) [metachunk,fill=cyan] rectangle ++(0.4,0.4);
   
   \node[] at (13.1,-0.5) {\footnotesize \textit{File C:}};
   \draw (13.8,-0.7) [metachunk,fill=magenta] rectangle ++(0.4,0.4);
   \draw (14.3,-0.7) [metachunk,fill=magenta] rectangle ++(0.4,0.4);
   \draw (14.8,-0.7) [metachunk,fill=yellow] rectangle ++(0.4,0.4);

   
   
\end{tikzpicture}
    \caption{Overview of data deduplication process.}
    \label{fig:dedupProcess}
\end{figure*}

We find that many proposed chunking algorithms are compositions of interchangeable techniques. Based on this observation, we advocate a \emph{modular} view of chunking algorithms, distinguishing between \emph{base chunking algorithms} (\eg, Rabin~\cite{lbfs} and Gear~\cite{ddelta}) and \emph{composite algorithms} (\eg, TTTD~\cite{tttd} and FastCDC~\cite{fastcdc2}), which augment base chunking algorithms with higher-level optimizations.
\Cref{fig:composite} illustrates three pervasive chunking algorithms and their decompositions.
This perspective reveals that many composite algorithms are evaluated only in combination with a specific base chunking algorithm, despite being compatible with a broader range of alternatives. Consequently, identifying the strongest base chunking algorithm becomes a fundamental question. Yet the state of the art remains unclear. 
Existing studies often conflate the effects of high-level techniques and base chunking algorithms rather than evaluating them independently~\cite{udayashankar2024seqcdc,udayashankar2025vectorcdc,ram,udayashankar2024impact}. 
For example, TTTD is frequently benchmarked using Rabin~\cite{liu2023dedupbench,udayashankar2024seqcdc}, while newer composite algorithms such as RapidCDC~\cite{rapidcdc}, UltraCDC~\cite{zhou2022ultracdc}, and SuperCDC~\cite{wan2022supercdc} largely recombine existing techniques in different ways. As a result, current benchmarks provide limited insight into the relative merits of the underlying base chunking algorithms.
The problem is further exacerbated by inconsistent evaluation methodologies. Existing benchmarks are typically conducted on different datasets and with custom metrics~\cite{ddelta,ae2,ram,mii,pci}. Moreover, several algorithms lack a principled method for selecting parameters that achieve a desired target chunk size, precluding fair comparisons based on deduplication efficiency, since smaller chunks generally yield higher deduplication ratios.
For instance, we uncover a longstanding parametrization error in AE~\cite{ae2} that caused chunks to be 13--39\,\% smaller than the configured target size, artificially inflating reported deduplication ratios.

In this paper, we conduct a rigorous analysis and benchmark of representative base chunking algorithms.
Our goal is to determine which byte-level cut-point rules remain effective once high-level techniques are stripped away and algorithm settings are selected for common configured targets on uniformly random data before being evaluated unchanged on realistic workloads.
Specifically, this paper makes the following contributions:
\begin{itemize}
    \item We release a new benchmark framework that supports modular CDC evaluations by separating base chunking algorithms from high-level techniques.
    \item We benchmark eight representative base chunking algorithms---Rabin~\cite{rabinPrimary}, Buzhash~\cite{buzhash}, Gear~\cite{ddelta}, AE~\cite{ae2}, RAM~\cite{ram}, MII~\cite{mii}, PCI~\cite{pci}, and SeqCDC~\cite{udayashankar2024seqcdc}---on four realistic workloads (\Cref{table:datasets}), measuring throughput, average chunk size, chunk-size variance, and deduplication ratio.
    \item We derive parameter rules for common target chunk sizes, including formulas for AE, RAM, and MII, and assess their calibration on uniformly random data, distinguishing configured targets from empirical means.
\end{itemize}


\section{Background and Motivation}\label{sec:background}
Data deduplication splits files into chunks, fingerprints them, creates reconstruction metadata, and stores unique chunks and metadata (see \Cref{fig:dedupProcess}).
Our work focuses on the first stage of this pipeline.
The algorithm by which files are chunked directly influences the probability of finding redundant chunks, which can then be deduplicated.
The most efficient approach, \ac{FSC}, splits a file into equal-sized chunks but suffers from the \emph{boundary-shift problem} illustrated in \Cref{fig:boundaryShiftProblem}: inserted or removed content shifts boundaries and causes unchanged regions to appear unique.
\ac{CDC} algorithms address this problem by forming chunk boundaries based on \emph{content} rather than \emph{position}.
These failure modes and design goals motivate the main CDC families discussed below.

\begin{figure}[h]
    \centering
    \begin{tikzpicture}
    \node[label] (cbl) at (0.0,1.7) {\emph{chunk boundary}};
    \node (cb) at (1.5, 1.36) {};
    \node[label] (shifted) at (2.5, 0.6) {\emph{shifted}};
    \node[label] (inserted) at (2.1, -0.55) {\emph{inserted content}};
    \draw[->] (cbl.east) -- (cb.north);
    \draw[->] (1.5, 0.8) -- (2, 0.3);
    \draw[->] (inserted.west) -- (0.5, -0.1);
    \node at (-0.8,1.1) {File A};
    \draw[thick] (1.5, 1.35) -- (1.5, 0.85);
    \draw[thick] (3, 1.35) -- (3, 0.85);
    \draw[thick] (4.5, 1.35) -- (4.5, 0.85);
    \draw[draw=black] (0, 0.95) rectangle ++(6, 0.3);
    \node at (-0.8,0.05) {File B};
    \draw[thick] (1.5, 0.3) -- (1.5, -0.15);
    \draw[thick] (3, 0.3) -- (3, -0.15);
    \draw[thick] (4.5, 0.3) -- (4.5, -0.15);
    \draw[thick,dotted] (2, 0.3) -- (2, -0.15);
    \draw[thick,dotted] (3.5, 0.3) -- (3.5, -0.15);
    \draw[thick,dotted] (5, 0.3) -- (5, -0.15);
    \draw[draw=black] (0, -0.1) rectangle ++(6, 0.3);
    \draw[draw=black,fill=lightgray] (0, -0.1) rectangle ++(0.5, 0.3);
\end{tikzpicture}
    \vspace{-1.5em}
    \caption{Boundary-shift problem.}
    \label{fig:boundaryShiftProblem}
\end{figure}


\subsection{Hash-Based CDC}
\label{subsec:background:hash-based-cdc}

The earliest \ac{CDC} algorithms used a sliding window of size $w$ over a file
and a hash function
to determine chunk boundaries.
The hash of each iteration of the sliding window is compared against a bitmask to determine new chunk boundaries.
For efficiency this is done using \emph{rolling hash} functions, such as Rabin~\cite{rabin,rabinkarp,rabinApplications}, Buzhash~\cite{buzhash}, or Gear~\cite{ddelta}.
We refer to this approach generally as \ac{BSW}, illustrated in \Cref{fig:slidingWindow}.
Hash-based CDC yields higher deduplication ratios than \ac{FSC}, but is criticized for low throughput and high chunk-size variance.
Efficient rolling hashes such as Gear~\cite{ddelta} mitigate the former, while the latter remains intrinsic to hash-based CDC.

\begin{figure}[tb]
  \centering
  \begin{tikzpicture}[scale=0.9, every node/.style={scale=0.9}]
\tikzstyle{every path}=[thick]

\edef\sizetape{0.6cm}
\tikzstyle{tmtape}=[draw,minimum size=\sizetape]
\tikzstyle{tmhead}=[arrow box,draw,minimum size=.5cm,arrow box
arrows={east:.25cm, west:0.25cm}]


\begin{scope}[start chain=1 going right,node distance=-0.15mm,local bounding box=tape1]
    \draw[draw=black,line width=1.8pt] (-0.29,-0.29) rectangle ++(2.45,0.6);
    \node [on chain=1,tmtape] (start) {h};
    \node [on chain=1,tmtape] {e};
    \node [on chain=1,tmtape] {l};
    \node [on chain=1,tmtape] (end) {l};
    \node [on chain=1,tmtape] {o};
    \node [on chain=1,tmtape] {};
    \node [on chain=1,tmtape] {w};
    \node [on chain=1,tmtape] {o};
    \node [on chain=1,tmtape] {r};
    \node [on chain=1,tmtape] {l};
    \node [on chain=1,tmtape] {d};
    \draw [decorate,decoration={brace,amplitude=5pt,mirror,raise=3ex}]
        (start.west) -- (end.east) node[midway,yshift=-2.8em]{sliding window $w$};
\end{scope}
  

\begin{scope}[start chain=1 going right,node distance=-0.15mm,local bounding box=tape2, shift={($(tape1.west)-(-0.5,1.2cm)$)}]
    \draw[draw=black,line width=1.8pt] (0.32,-0.31) rectangle ++(2.45,0.6);
    \node [on chain=1,tmtape] {h};
    \node [on chain=1,tmtape] {e};
    \node [on chain=1,tmtape] {l};
    \node [on chain=1,tmtape] {l};
    \node [on chain=1,tmtape] {o};
    \node [on chain=1,tmtape] {};
    \node [on chain=1,tmtape] {w};
    \node [on chain=1,tmtape] {o};
    \node [on chain=1,tmtape] {r};
    \node [on chain=1,tmtape] {l};
    \node [on chain=1,tmtape] {d};
\end{scope}
  

\begin{scope}[start chain=1 going right,node distance=-0.15mm, shift={($(tape2.west)-(-0.31,1cm)$)}]
    \node [draw=none,yshift=14,xshift=87] {$\ldots$};
    \draw[draw=black,line width=1.8pt] (4.01,-0.31) rectangle ++(2.45,0.6);
    \node [on chain=1,tmtape] {h};
    \node [on chain=1,tmtape] {e};
    \node [on chain=1,tmtape] {l};
    \node [on chain=1,tmtape] {l};
    \node [on chain=1,tmtape] {o};
    \node [on chain=1,tmtape] {};
    \node [on chain=1,tmtape] {w};
    \node [on chain=1,tmtape] {o};
    \node [on chain=1,tmtape] {r};
    \node [on chain=1,tmtape] {l};
    \node [on chain=1,tmtape] {d};
\end{scope}

\end{tikzpicture}
  \vspace{1em}
  \caption{Window sliding over a file byte-by-byte. At every offset, the fingerprint generated from the current window is matched against a bitmask.}
  \label{fig:slidingWindow}
\end{figure}

\subsection{Chunk-Size Variance}
\label{subsec:background:chunk-size-variance}

High chunk-size variance gives rise to the issue of pathological, \ie very large or very small, chunk sizes~\cite{tttd}.
Very large chunks can be the product of a recurring pattern in the byte sequence
that fails to meet the criteria for setting a cut-point~\cite{schleimer2003winnowing}.
Large chunks are less likely to be duplicates and more likely to be affected by modifications, increasing the number of bytes that must be stored again~\cite{tttd}.
Pathologically small chunks are undesirable as well
because they produce more overhead due to additional metadata, processing,
and, in distributed settings, additional round-trip times.
In some use cases, high chunk-size variance is not a critical concern.
For example, with MII~\cite{mii} and PCI~\cite{pci}, Zhang \etal focus on the application of CDC for \emph{incremental file synchronization} (\eg, with rsync~\cite{rsync}).
This use case does not consider data reduction within long-term storage systems but instead a transaction of file synchronization between two machines.
This relaxes the constraint on low chunk-size variance,
as the overhead from excessively large or small chunks does not accumulate over multiple revisions of the dataset.

\subsection{Hashless CDC}
\label{subsec:background:hash-less-cdc}

High chunk-size variance and low throughput led to the development of other approaches.
MAXP~\cite{maxp} and LMC~\cite{lmc} were the first CDC algorithms to determine chunk boundaries based on local extrema in the data stream.
As an illustrative example, \Cref{fig:ae} presents AE~\cite{ae}, a representative hashless CDC algorithm.
By using byte comparisons rather than hash functions, these algorithms claim to achieve higher throughput than hash-based algorithms.
Furthermore, they are attributed with a significantly lower chunk-size variance~\cite{lmc,ae}.
Modern adaptations of this technique can be found in AE~\cite{ae} and RAM~\cite{ram}.
Alternative hashless CDC algorithms exist:
PCI~\cite{pci} compares the popcount of a sliding window to a threshold in order to determine chunk boundaries.

\begin{figure}[ht]
  \centering
  \scalebox{0.9}{
\begin{tikzpicture}

\tikzstyle{line} = [draw, latex'-latex']

\definecolor{darkgray176}{RGB}{176,176,176}
\definecolor{steelblue31119180}{RGB}{31,119,180}

\draw (0.83,2.65) node[draw=none,color=red!80!black,align=center] {$c_1$};
\path [->,draw=red!80!black,dotted,thick] (0.59,1.285) -- node [text width=1.2cm,color=red!80!black,midway,above,align=center] {\footnotesize $h$} (1.56,1.285);
\path [draw=red!80!black,thick] (1.57,0) -- node [] {} (1.57,2.42);

\draw (2.805,2.65) node[draw=none,color=red!80!black,align=center] {$c_2$};
\path [->,draw=red!80!black,dotted,thick] (2.8,1.905) -- node [text width=1.2cm,color=red!80!black,midway,above,align=center] {} (3.74,1.905);
\path [draw=red!80!black,thick] (3.74,0) -- node [] {} (3.74,2.42);

\draw (4.845,2.65) node[draw=none,color=red!80!black,align=center] {$c_3$};
\path [->,draw=red!80!black,dotted,thick] (4.48,2.16) -- node [text width=1.2cm,color=red!80!black,midway,above,align=center] {} (5.42,2.16);
\path [draw=red!80!black,thick] (5.44,0) -- node [] {} (5.44,2.42);

\draw (6.38,2.65) node[draw=none,color=red!80!black,align=center] {$c_4$};
        
\begin{axis}[
tick align=outside,
tick pos=left,
xlabel = {Byte position},
ylabel = {Byte value},
x grid style={darkgray176},
xmin=-1, xmax=28,
xtick style={color=black},
y grid style={darkgray176},
ymin=0, ymax=256,
ytick style={color=black},
height=4cm,width=8.6cm
]
\draw[draw=none,fill=steelblue31119180] (axis cs:-0.4,0) rectangle (axis cs:0.4,30);
\draw[draw=none,fill=steelblue31119180] (axis cs:0.6,0) rectangle (axis cs:1.4,138);
\draw[draw=none,fill=steelblue31119180] (axis cs:1.6,0) rectangle (axis cs:2.4,61);
\draw[draw=none,fill=steelblue31119180] (axis cs:2.6,0) rectangle (axis cs:3.4,117);
\draw[draw=none,fill=steelblue31119180] (axis cs:3.6,0) rectangle (axis cs:4.4,16);
\draw[draw=none,fill=steelblue31119180] (axis cs:4.6,0) rectangle (axis cs:5.4,44);
\draw[draw=none,fill=steelblue31119180] (axis cs:5.6,0) rectangle (axis cs:6.4,110);
\draw[draw=none,fill=steelblue31119180] (axis cs:6.6,0) rectangle (axis cs:7.4,194);
\draw[draw=none,fill=steelblue31119180] (axis cs:7.6,0) rectangle (axis cs:8.4,186);
\draw[draw=none,fill=steelblue31119180] (axis cs:8.6,0) rectangle (axis cs:9.4,123);
\draw[draw=none,fill=steelblue31119180] (axis cs:9.6,0) rectangle (axis cs:10.4,203);
\draw[draw=none,fill=steelblue31119180] (axis cs:10.6,0) rectangle (axis cs:11.4,104);
\draw[draw=none,fill=steelblue31119180] (axis cs:11.6,0) rectangle (axis cs:12.4,4);
\draw[draw=none,fill=steelblue31119180] (axis cs:12.6,0) rectangle (axis cs:13.4,89);
\draw[draw=none,fill=steelblue31119180] (axis cs:13.6,0) rectangle (axis cs:14.4,27);
\draw[draw=none,fill=steelblue31119180] (axis cs:14.6,0) rectangle (axis cs:15.4,98);
\draw[draw=none,fill=steelblue31119180] (axis cs:15.6,0) rectangle (axis cs:16.4,98);
\draw[draw=none,fill=steelblue31119180] (axis cs:16.6,0) rectangle (axis cs:17.4,230);
\draw[draw=none,fill=steelblue31119180] (axis cs:17.6,0) rectangle (axis cs:18.4,21);
\draw[draw=none,fill=steelblue31119180] (axis cs:18.6,0) rectangle (axis cs:19.4,192);
\draw[draw=none,fill=steelblue31119180] (axis cs:19.6,0) rectangle (axis cs:20.4,90);
\draw[draw=none,fill=steelblue31119180] (axis cs:20.6,0) rectangle (axis cs:21.4,128);
\draw[draw=none,fill=steelblue31119180] (axis cs:21.6,0) rectangle (axis cs:22.4,100);
\draw[draw=none,fill=steelblue31119180] (axis cs:22.6,0) rectangle (axis cs:23.4,27);
\draw[draw=none,fill=steelblue31119180] (axis cs:23.6,0) rectangle (axis cs:24.4,154);
\draw[draw=none,fill=steelblue31119180] (axis cs:24.6,0) rectangle (axis cs:25.4,128);
\draw[draw=none,fill=steelblue31119180] (axis cs:25.6,0) rectangle (axis cs:26.4,20);
\draw[draw=none,fill=steelblue31119180] (axis cs:26.6,0) rectangle (axis cs:27.4,149);
\end{axis}

\end{tikzpicture}}
  \caption{Illustration of the AE chunking algorithm on a sequence of 27 bytes and a horizon $h=4$. The vertical red lines mark the cut points.
  }
  \label{fig:ae}
\end{figure}

\begin{table*}[ht]
    \centering
    \caption{Compatibility matrix between base chunking algorithms and high-level techniques.}
    \renewcommand{\arraystretch}{1.1}
    \setlength{\tabcolsep}{6pt}
    \begin{tabularx}{\textwidth}{Xccccccccc}
        \toprule
        & \multicolumn{9}{c}{\textbf{High-Level Techniques}} \\
        \cmidrule(lr){2-10}
        & \multicolumn{6}{c}{\textbf{Throughput}} & \multicolumn{2}{c}{\textbf{Chunk-Size Variance}} & \textbf{Deduplication} \\
        \cmidrule(lr){2-7} \cmidrule(lr){8-9} \cmidrule(lr){10-10}
        \textbf{\makecell{Base Algorithm}}
         & \makecell{JC\\\cite{jin2023accelerating}}
         & \makecell{SIMD\\\cite{sscdc}}
         & \makecell{MUCH\\\cite{much}}
         & \makecell{VectorCDC\\\cite{udayashankar2025vectorcdc}}
         & \makecell{GPU\\\cite{shredder}}
         & \makecell{Caching\\\cite{quickcdc,rapidcdc}}
         & \makecell{Multiple Thresholds\\\cite{tttd,rabinPrimary,fastcdc2,zhou2022ultracdc,wan2022supercdc}}
         & \makecell{Merging\\\cite{much,tttd,sscdc}}
         & \makecell{Rechunking\\\cite{fbc,bimodal,anchorDrivenSubchunk,zhou2013hysteresis,fingerdiff}} \\
        \midrule
        \textbf{Hash-based} & \checkmark & \checkmark & \checkmark & & \checkmark & \checkmark & \checkmark & \checkmark & \checkmark \\
        \textbf{Hashless} & & & \checkmark & \checkmark & & \checkmark & \checkmark &  \checkmark & \checkmark \\
        \bottomrule
    \end{tabularx}
    \label{tab:cdcmatrix}
\end{table*}

\subsection{High-Level Techniques}
\label{subsec:background:high-level-techniques}

Parallel to improvements in the basic chunking mechanism,
researchers have proposed high-level methods to increase chunking \emph{throughput}, lower \emph{chunk-size variance}, or enhance the \emph{deduplication ratio}.
Often (\eg, with FastCDC~\cite{fastcdc2} and QuickCDC~\cite{quickcdc}), proposed algorithms are a combination of several high-level techniques.
\Cref{tab:cdcmatrix} summarizes their compatibility with base algorithms beyond those used in the original papers.

\paragraph*{Chunking Throughput}
A popular strategy for increasing chunking throughput and lowering chunk-size variance at the same time
is the enforcement of minimum and maximum chunk sizes~\cite{ram,fastcdc2,udayashankar2024seqcdc}.
This ensures that chunk sizes remain within a desirable range
and allows implementations to skip computation for the minimum chunk length within each new chunk.
While effective at improving those two metrics, it contradicts the very idea of CDC:
to set chunk boundaries based on content, not position.
Inevitably, this technique trades off deduplication efficiency~\cite{rabinPrimary,fastcdc2,ae}.
More sophisticated techniques include, \eg,
a caching layer~\cite{quickcdc,rapidcdc}
or normalization of the chunk-size distribution~\cite{fastcdc2,zhou2022ultracdc,wan2022supercdc}
to allow for skipping with no or negligible impact on deduplication ratio.
Other efforts have explored parallelization by exploiting properties of modern hardware~\cite{sscdc,udayashankar2025vectorcdc,shredder,much}.

\paragraph*{Chunk-Size Variance}
Hash-based chunking algorithms suffer from a highly skewed chunk size distribution~\cite{fastcdc2,chapuis2016throughput,ae2}.
A common approach to achieve a more uniform chunk-size distribution
while minimizing the impact on deduplication ratio is the use of multiple threshold bitmasks.
Multiple thresholds enable dynamic adjustments to the probability of satisfying the chunking condition, increasing or decreasing it as needed.
Most notable are the algorithms in TTTD~\cite{tttd} and \ac{RC}~\cite{rabinPrimary}.
Similar algorithms have later been proposed under the label of \ac{NC}~\cite{fastcdc2,zhou2022ultracdc,wan2022supercdc}.
Furthermore, a popular technique for eliminating pathologically small chunks involves merging adjacent chunks~\cite{tttd,much,sscdc}.
Similarly, pathologically large chunks can be eliminated by rechunking them with a more relaxed chunking condition.

\paragraph*{Deduplication Ratio}
Rechunking large chunks also positively affects the number of redundant chunks that are detected in a dataset.
Different variations of this technique have been formalized,
\eg, within FBC~\cite{fbc} and Bimodal chunking~\cite{bimodal}, among other approaches~\cite{anchorDrivenSubchunk,zhou2013hysteresis,fingerdiff}.


\section{Methodology}\label{sec:setup}
We benchmark eight representative \ac{CDC} algorithms---Rabin, Buzhash, Gear, AE, RAM, MII, PCI, and SeqCDC---in a standardized environment, measuring throughput, chunk-size distribution, and deduplication ratio across target chunk sizes from \qty{512}{B} to \qty{8}{KiB}.
We isolate algorithmic performance from factors not inherent to the chunking algorithms themselves: deployed systems often combine a base algorithm with size thresholds, normalization, caching, skipping, and fingerprinting, which obscure which cut-point rule causes the measured effect.
We therefore tune every algorithm to the same target  chunk sizes on uniformly random data---analytically where possible and by simulation otherwise---freeze those settings, and then evaluate them unchanged on all realistic datasets. We discuss the implications of this choice in \Cref{sec:discussion}.

\begin{table*}[ht]
    \centering
    \caption{Datasets and their Gzip compression ratios.}
    \label{table:datasets}
    \begin{tabularx}{\textwidth}{lrrX}
\toprule
Name & Size & Compress. Ratio & Description \\
\midrule
RAND & \qty{11}{GB} & \qty{100}{\%} & Randomly generated binary file.\\
WEB & \qty{46}{GB} & \qty{88}{\%} & One month of \texttt{tar}-ed daily website snapshots of \texttt{nytimes.com} using \texttt{wget -l 1} and the Internet Archive.\\
DB & \qty{100}{GB} & \qty{31}{\%} & 25 snapshots of a MySQL VM running TPC-C~\cite{leutenegger1993modeling}.\\
VMB & \qty{285}{GB} & \qty{25}{\%} & 33 backups of a nightly Rust build server.\\
CODE & \qty{110}{GB} & \qty{18}{\%} & 207 versions of \texttt{tar}-ed source code distributions for Linux, GDB, GCC, and Emacs.\\
\bottomrule
\end{tabularx}

\end{table*}

We run our benchmarks on a machine with an Intel Xeon Gold 6154 CPU at \qty{3.00}{GHz} and \qty{378}{GB} of memory, running Debian 12.
Throughput and microarchitectural findings describe this system and the evaluated implementations; their magnitude and ranking may change with the processor, compiler, or execution environment.
For identical input bytes, parameters, and deterministic algorithm semantics, chunk boundaries and the resulting deduplication ratios and chunk-size distributions do not depend on execution speed.
Our recommendations concern the evaluated base algorithms; they do not establish rankings for complete deduplication systems or for composites that add thresholds, normalization, caching, or skipping.

To produce a desired mean chunk size, \ac{CDC} algorithms must be tuned through their parameters.
\Cref{sec:analysis} derives parameter rules from each algorithm's stochastic properties, using simulation where analytical derivation is intractable.

\begin{table}[tb]
    \centering
    \caption{Algorithmic parameter settings.}
    \label{table:algoParams}
    \begin{tabularx}{\columnwidth}{Xl}
\toprule
Algorithm & Parameter Settings\\
\midrule
Rabin, Buzhash & $w=32$, $b=\operatorname{round}(\log_2(\mu-w))$ \\
Gear & $b=\operatorname{round}(\log_2\mu)$ \\
AE & Horizons in \Cref{table:aeH} \\
RAM & Chosen in accordance with \Cref{eq:ram-approx}. \\
MII & Chosen in accordance with \Cref{eq:mii}. \\
PCI & Obtained through simulation. \\
SeqCDC & Obtained through simulation. \\
\bottomrule
\end{tabularx}

\end{table}

\subsection{Datasets}\label{sec:setup:datasets}

Because both deduplication ratio and chunk-size distribution strongly depend on the input data, we use multiple realistic datasets and one artificially generated dataset with maximal entropy.
Our dataset selection follows standard practice in \ac{CDC} algorithm evaluations~\cite{fastcdc,ram,bimodal,quickcdc,udayashankar2024seqcdc,udayashankar2025vectorcdc}.
\Cref{table:datasets} lists the datasets and their Gzip compression ratio\footnote{We apply Gzip with the default compression level of 6.}:
$$
\text{Compression ratio}=\frac{\text{Size After Gzip Compression}}{\text{Original Size}}.
$$
Lower values indicate more compressible data.
The datasets are chosen to reflect storage and data-management scenarios in which deduplication is typically deployed.
RAND serves as a high-entropy calibration baseline for parameter tuning and throughput measurements, while the realistic datasets exercise different forms of temporal redundancy.
WEB captures daily website snapshots, where HTML, scripts, metadata, and static assets evolve at different rates.
DB represents repeated backups of a transactional database VM running TPC-C, a setting where database pages, logs, and filesystem layout can all influence redundancy.
VMB captures nightly backups of a build server, combining operating-system state, source trees, package caches, and generated build artifacts.
CODE contains many versions of large source-code distributions and stresses algorithms on compressible text, small files, and archive padding.
For WEB and CODE, we store the file collections as uncompressed TAR archives, while DB and VMB are snapshot-style storage images.
We evaluate these backups and snapshots as streams rather than isolated files.

\subsection{Benchmarking Framework}
\label{sec:setup:benchmark}

To measure both algorithm throughput and the chunks produced, we implement a benchmarking framework in Rust, with a focus on performance and efficiency.
The framework, including the algorithm implementations, is published on GitHub under permissive licenses.\footnote{\url{https://github.com/mrd0ll4r/cdc-bench}}
It simplifies the implementation of new base chunking algorithms and enables their performance to be tested either in isolation or together with high-level techniques such as \ac{NC}, caching, and size limits.
It also excludes overhead from fingerprinting and avoids disk I/O by reading data from a RAMdisk.
The framework consists of a \emph{driver} that reads the input file in large blocks and uses them to drive the selected algorithm.
Each algorithm receives consecutive blocks of file data, identifies boundaries within them, and advances its internal state as it ingests data.
The algorithms closely follow their original descriptions.
All algorithms are integrated into a single benchmark program, compiled with the release profile and tuned for the target machine.
The throughput baseline uses Rust \texttt{1.82.0-nightly}.
The release configuration enables thin link-time optimization, one code-generation unit, and native CPU targeting (\texttt{-C target-cpu=native}).

When evaluating algorithm throughput, we do not fingerprint the generated chunks.
Instead, the benchmark program computes and reports a single aggregate value---the sum of all chunk sizes---to ensure that the compiler cannot eliminate the chunking code through optimization.
We verify that file access does not limit performance by implementing and evaluating fixed-size chunking (FSC) within the same framework.
Sufficient memory is reserved to prevent memory pressure during execution.
Apart from the benchmark process itself, the system remains idle throughout the experiments.
All performance measurements are conducted sequentially to avoid interference between benchmark runs.
For each combination of dataset, algorithm, and target chunk size, the benchmark is executed $n=10$ times.
Additionally, we perform a single warm-up run for each dataset-algorithm combination before collecting measurements, and we capture performance counters using {\ttfamily perf}.
Throughput measurements use RAND, copied to a tmpfs RAMdisk before execution.
The warm-up uses a target chunk size of \qty{512}{B} for each algorithm.
The benchmark runs in a KVM virtual machine without explicit CPU affinity or NUMA placement controls.
CPU frequency is controlled by the host firmware, with power saving disabled.
Where applicable, we normalize measurements by dataset size.
For microarchitectural counters such as \ac{ipc}, instruction count, and branch count, we report mean values; the standard error is at most $0.01$ for all reported metrics and is therefore omitted.
For runtime and throughput, which are commonly skewed in benchmarking experiments~\cite{lemire2023}, we report the median and \ac{iqr}, calculated as $Q_3 - Q_1$.

\section{Parameter Settings and Derivations}\label{sec:analysis}

Every CDC algorithm operates based on specific parameter settings.
We denote the configured target mean chunk size by $\mu$ and the empirical mean by $\bar{c}$.
While $\bar{c}$ depends on both the parameter settings and the characteristics of the respective dataset,
parameter selection can be theoretically grounded with the assumption of uniformly distributed random data.
We therefore use uniformly random data as a common calibration baseline and then evaluate how well the resulting settings transfer to realistic datasets.
This choice separates parameter tuning from dataset-specific behavior and avoids retuning algorithms on the same data used for comparison.

\subsection{\acf{BSW}}
Given random data, a uniformly distributed hash function, and a bitmask of length $b$,
hash-based algorithms find a chunk boundary at each byte with the probability of $2^{-b}$~\cite{chapuis2016throughput,rabinApplications}.
Preliminary experiments confirmed uniform hash-value distributions for Rabin, Buzhash, and Gear. These measurements describe how often hash values occur, but do not test whether successive cut-point decisions are statistically independent.
We set window size $w=32$ for Rabin and Buzhash to align with the implicit window size in Gear.
The implementation selects $b=\operatorname{round}(\log_2(\mu-w))$ for Rabin and Buzhash, and $b=\operatorname{round}(\log_2\mu)$ for Gear, which does not impose the same explicit window-fill delay.
For the five power-of-two targets in this study, both rules yield $b=9,10,11,12,13$, respectively.

\subsection{\acf{AE}}
\label{subsubsec:setup:parameters:ae}

The average chunk size in AE is largely determined by the horizon $h$.
The original authors suggest that $h=\mu/(e-1)$~\cite{ae,ae2}.
This formula has also been implemented in their open-source testing framework Destor~\cite{fu2015design},
as well as another benchmarking tool DedupBench~\cite{liu2023dedupbench} (who corrected the formula following our report~\cite{dedupbenchIssue}).
Following this, the formula was used to produce results for an evaluation of AE in various other works
\cite{ellappan2021dynamic,fastcdc2,jin2023accelerating,ellappan2023smart}.
However, preliminary experimental evaluations on random data show chunk size averages well below the desired target, motivating us to revisit the formula.
Our theoretical analysis then reveals that this formula fails to account for the discrete value range of random variables within $[0,255]$, as well as the conditioning of probabilities on dependent events.
Concretely, the maximum byte value in a horizon of length $h$ equals 255 with probability $1-(255/256)^h$.
For large $h$, the cut predicate therefore approaches the event that the next unseen byte equals 255, which occurs with probability $\frac{1}{256}$.
Based on this reasoning, we propose using the approximation $h\approx\mu -256$ if $\mu\geq\qty{2}{KiB}$ and rely on simulations when target chunk sizes are smaller.
At the \qty{2}{KiB} threshold, the approximate horizon $h=1792$ gives $1-(255/256)^{1792}\approx0.999$.
This probability supports the large-horizon approximation; it is not a bound on its mean-size error.
\Cref{table:aeH} lists the horizons obtained from simulation for the two smaller targets and from $h=\mu-256$ for targets at or above \qty{2}{KiB}.
\Cref{app:ae} provides the detailed stochastic analysis of AE.

\begin{table}[h]
    \centering
    \caption{AE horizon parameterization.}
    \label{table:aeH}
    \begin{tabular}{rr}
\toprule
$\mu$ bytes & $h$ bytes \\
\midrule
512 & 348 \\
1024 & 793 \\
2048 & 1792 \\
4096 & 3840 \\
8192 & 7936 \\
\bottomrule
\end{tabular}
 
\end{table}

\subsection{\acf{RAM}}
\label{sec:analysis:ram}
The authors of RAM do not explicitly state how to tune the horizon $h$
for a desired mean chunk size.
Let $M_h$ be the maximum of $h$ independent, uniformly distributed bytes.
Conditional on $M_h=m$, the number $T$ of subsequent bytes up to and including the first matching byte is geometric with success probability $(256-m)/256$.
Averaging these conditional waiting times gives the model-predicted mean chunk size $\mu(h)$:
\begin{equation}
\begin{split}
\mu(h) = h + \sum_{m=0}^{255}\frac{256}{256-m}
\left[\left(\frac{m+1}{256}\right)^h
-\left(\frac{m}{256}\right)^h\right].
\end{split}
\label{eq:ram}
\end{equation}
This identity assumes an unbounded IID byte stream and counts the matching byte in the chunk.
Minimizing $|\mu(h)-\mu|$ over positive integer horizons provides a theoretical calibration rule for the configured target $\mu$.

The benchmark parameter calculation replaces the random horizon maximum by its expected value, giving the approximation
\begin{equation}
\widetilde\mu(h)=h+\frac{256}{256-\mathrm{E}[M_h]}.
\label{eq:ram-approx}
\end{equation}
This substitution uses a single representative maximum instead of accounting for variation between chunks.
It becomes accurate for large horizons, where $\Pr[M_h=255]=1-(255/256)^h$ approaches one and both means approach $h+256$.
For smaller horizons, however, the maximum varies more, and the waiting time is a nonlinear function of that maximum: evaluating it at the expected maximum underestimates the mean waiting time by convexity.

Solving \Cref{eq:ram-approx} for the target mean and rounding gives the benchmark horizons $h=327,780,1792,3840,7936$ for targets $512,1024,2048,4096,8192$ bytes, respectively.
At the two smallest targets, \Cref{eq:ram} predicts means of \qty{544}{B} and \qty{1030}{B} (rounded to whole bytes), matching the rounded RAND measurements in \Cref{tab:csd_full}.
The unrounded predictions exceed the targets by \qty{6.17}{\%} and \qty{0.57}{\%}, respectively.
The theoretically calibrated horizons are 300 and 774; they are not the settings used for these measurements.
For the three larger targets, both rules select the same integer horizons.
\Cref{app:ram} provides the conditional waiting-time derivation and the detailed approximation-error analysis.

\subsection{\acf{MII}}
The average chunk size in MII depends on the interval parameter $w$.
Our approximation starts from the boundary predicate used by MII:
a chunk boundary is found when a window of $w$ bytes is strictly increasing.
Thus, $w$ counts bytes and a matching window contains $w-1$ increasing adjacent comparisons, consistent with the implementation.
For uniformly random input, each window contains $256^w$ possible byte sequences.
For the predicate to hold, all $w$ bytes must be distinct; for each set of $w$ distinct byte values, exactly one ordering is strictly increasing.
The number of satisfying windows is therefore $\binom{256}{w}$, yielding the approximate boundary probability $\binom{256}{w}\cdot256^{-w}$.
Treating candidate windows as independent gives an expected waiting time equal to the inverse of this probability, plus the window length $w$.
This counting argument captures the factorial growth of the average chunk size with increasing $w$ and the weak control over chunk sizes that follows from it, as also noted by the authors of MII~\cite{mii}.

We use the factor $1.14$ as an empirical correction calibrated on RAND, fixed across target sizes and datasets.
It scales the inverse-probability term, not the additive window length, and is not a theoretical constant.
Dependencies between overlapping windows are a possible explanation for the discrepancy with the independence approximation.
This yields the following approximation:

\begin{equation}
	\mu \approx	1.14\cdot \left(\binom{256}{w}\cdot256^{-w}\right)^{-1} + w
	\label{eq:mii}
\end{equation}

MII presents a special case among the evaluated algorithms,
as $w=5,6,7,8$ bytes give rounded predictions of $147,877,6248,51341$ bytes, respectively.
This makes aligning MII to our range of target chunk sizes impossible.
We use windows $w=6,6,6,7,7$ for targets $512,1024,2048,4096,8192$ bytes, respectively.
For $w=6$ and $w=7$, the rounded RAND means are \qty{887}{B} and \qty{6242}{B} (\Cref{tab:csd_full}).

\subsection{\acf{PCI}}

The average chunk size produced by PCI depends on the setting of its two parameters $w$ (window size) and $\theta$ (threshold).
Specifically, it depends on the ratio $\frac{\theta}{w}$, as well as the absolute value of $w$ (as the window acts as a lower bound on the chunk size).
Because we are not aware of a tractable analytical mapping from $(w,\theta)$ to $\mu$, we obtain suitable values across all target chunk sizes through simulation on RAND.
For each target, we select the parameter pair whose simulated mean chunk size is closest to $\mu$ and use that pair unchanged for all datasets.

\begin{table}[h]
    \centering
    \caption{Parameter settings for PCI.}
    \label{table:pciParams}
    \newcommand{\pciParamRow}[2]{\num{#1} & \num[evaluate-expression]{#1*8} & \num{#2} & \num[evaluate-expression, round-mode=places, round-precision=3]{#2/(8*#1)}}
\begin{tabular}{rrrrr}
\toprule
$\mu$ & $w$ bytes & $8w$ bits & $\theta$ & $\frac{\theta}{8w}$ \\
\midrule
512 & \pciParamRow{58}{253} \\
1024 & \pciParamRow{34}{157} \\
2048 & \pciParamRow{61}{273} \\
4096 & \pciParamRow{39}{183} \\
8192 & \pciParamRow{57}{262} \\
\bottomrule
\end{tabular}
 
\end{table}

\subsection{SeqCDC}
The average chunk size in SeqCDC is the product of three parameters: \emph{SeqLength}, \emph{SkipTrigger}, and \emph{SkipSize}.
For their experimental evaluation,
the authors first generate candidate values via Monte Carlo simulation on randomized data~\cite{udayashankar2024seqcdc}.
Based on those results, they select a combination of values that performs best on one of their datasets.
We note that the choice of those parameters has a large effect on chunking throughput and deduplication ratio.
For example, a higher \emph{SkipSize} is likely to improve chunking throughput at the cost of the deduplication ratio.
Thus, multiple combinations of parameters can yield the same average chunk size and yet differ dramatically on the other metrics.
Since no analytical framework is available, we compile a candidate set through simulation on RAND, analogous to the original paper.
We then select one configuration per target size that matches $\mu$ while avoiding target-specific extremes in \emph{SkipSize}; this keeps the throughput--deduplication tradeoff comparable across target sizes.
The selected configurations are fixed before evaluating realistic datasets.

\begin{table}[h]
    \centering
    \caption{Parameter settings for SeqCDC.}
    \label{table:seqParams}
    \begin{tabular}{rrrrr}
\toprule
$\mu$ & Seq Length & Skip Trigger & Skip Size \\
\midrule
512  & 4 & 65  & 512  \\
  1024 & 5 & 85  & 32   \\
  2048 & 5 & 85  & 256  \\
  4096 & 5 & 120 & 1024 \\
  8192 & 5 & 55  & 1024 \\
\bottomrule
\end{tabular}
 
\end{table}

\section{Computational Efficiency}\label{sec:efficiency}
Much of the discussion and development around chunking algorithms has been fueled by the need to achieve higher throughput
while maintaining good deduplication.
Rabin, one of the oldest \ac{CDC} algorithms, is widely known to be slow,
which has spurred the development of faster \ac{CDC} algorithms.
We evaluate the algorithms in terms of throughput and use microarchitectural details to explain observed performance differences.
Throughput is reported as input processed per execution time.

\subsection{Results}
\label{subsec:perf:results}

We first provide an overview of the achievable throughput using the various algorithms.
To that end, we evaluate each of them on the RAND dataset with a target chunk size of \qty{2}{KiB}.
For better interpretability of the results, we provide them together with relevant microarchitectural counters in \Cref{tab:perf_overview}.
We discuss these results in detail below.


\begin{table}[tb]
	\centering
	\caption{
		Computational performance of chunking algorithms with $\mu=\qty{2}{\kibi\byte}$ on RAND.
		}
\begin{tabularx}{\columnwidth}{Xrrrrrr}
  \toprule
  & \multicolumn{2}{c}{Throughput (MiB/s)} & & & & \\
\cmidrule(lr){2-3}
Alg. & Median & IQR & Instr./B & IPC & Br./B & Br. Mispred. \\ \midrule
FSC & \num{1820} & \num{21} & \num{0.41} & \num{0.24} & \num{0.08} & \num{0.77}\,\% \\ 
  RAM & \num{886} & \num{19} & \num{8.44} & \num{2.39} & \num{1.76} & \num{0.43}\,\% \\ 
  Gear & \num{652} & \num{16} & \num{13.41} & \num{2.78} & \num{1.75} & \num{0.09}\,\% \\ 
  AE & \num{650} & \num{9} & \num{11.41} & \num{2.30} & \num{2.58} & \num{0.16}\,\% \\ 
  SeqCDC & \num{595} & \num{9} & \num{10.59} & \num{1.97} & \num{1.85} & \num{0.28}\,\% \\ 
  Buzhash & \num{311} & \num{6} & \num{29.36} & \num{2.90} & \num{5.07} & \num{0.10}\,\% \\ 
  PCI & \num{242} & \num{3} & \num{35.27} & \num{2.70} & \num{2.56} & \num{0.11}\,\% \\ 
  MII & \num{209} & \num{3} & \num{12.44} & \num{0.82} & \num{2.18} & \num{18.01}\,\% \\ 
  Rabin & \num{195} & \num{2} & \num{34.30} & \num{2.08} & \num{5.06} & \num{0.05}\,\% \\ 
   \bottomrule
\end{tabularx}

\label{tab:perf_overview}
\end{table}

\paragraph*{Hash-Based Algorithms}
The hash-based algorithms Rabin and Buzhash have relatively low throughput in this RAND experiment, while Gear is faster.
We discuss PCI alongside these sliding-window methods because it uses the same BSW scheme, but PCI is hashless: it replaces hashing with a rolling popcount.
Although these algorithms all operate in $\mathcal{O}(1)$ per input byte,
they differ substantially in constant complexity:
Rabin and PCI require more than 30 instructions per input byte, while Buzhash falls just below that,
resulting in the highest throughput of the three at $\qty{311}{MiB/s}$.
Gear uses only $\approx 13$ instructions per byte, leading to a much higher throughput of $\qty{652}{MiB/s}$.
In terms of microarchitectural performance,
all algorithms can be executed relatively well at $>2$ IPC and solid branch predictability.

Among the evaluated hash-based implementations, Gear combines low instruction count with high throughput.
This makes Gear a useful base for further evaluation with high-level techniques such as SIMD or normalized chunking; those combinations are outside this comparison.

\paragraph*{Hashless Algorithms}
The remaining hashless algorithms use extrema or byte comparisons (AE, RAM, SeqCDC, and MII) and generally perform well, with the exception of MII.
Among them, RAM performs best at $\qty{886}{MiB/s}$, markedly better than AE at $\qty{650}{MiB/s}$.
On this system and input, RAM thus achieves its goal of being faster than AE~\cite{ram}.
SeqCDC follows closely at $\qty{595}{MiB/s}$.
This is impressive, considering we elect to implement only the core SeqCDC algorithm,
without the optimization of skipping the start of each chunk.
MII, although algorithmically simple, performs poorly at only $\qty{209}{MiB/s}$.
It executes slightly more instructions per input byte than AE, RAM, and SeqCDC,
but it uses the CPU poorly, reaching only $\qty{0.82}{IPC}$.
A closer look shows that this is due to the poor predictability of the branch comparing the current input byte to the previous one.
We find that \qty{18}{\percent} of branches are mispredicted (``Br. Mispred.''), by far the worst behavior of all examined algorithms.

Among these algorithms (MII, RAM, SeqCDC, AE), MII performs poorly because its branch is difficult to predict.
Both AE and SeqCDC perform well, while RAM has the highest throughput in this experiment.
These throughput rankings apply to the evaluated base implementations on RAND at \qty{2}{KiB} on the measured system.
Finally, if \ac{CDC} is not a concern,
\ac{FSC} unsurprisingly outperforms all CDC contenders at about \qty{1.8}{GiB/s} of processed input.
These results also show that our benchmarking system is not limited by I/O.

\subsection{Key Takeaways}
\label{subsec:perf:conclusions}
Modern CPUs are complex.
It is not always obvious how an algorithm, given in pseudocode, performs on a real system.
We were able to show that various factors of the compiled code and microarchitecture,
in particular constant complexity and branch predictability, play an important role in achieving higher throughput.
Some algorithms (\eg, MII) are algorithmically simple yet difficult for the machine to execute.
Others (\eg, Gear) are both simple and fast to execute.

\begin{figure*}[!ht]
\centering
\subfloat[VMB]{\scalebox{0.8}{\input{fig/csd_vmb.tex}}%
    \label{fig:csd_vmb}}
\hfil
\subfloat[DB]{\scalebox{0.8}{\input{fig/csd_db.tex}}%
    \label{fig:csd_db}}
\hfil
\subfloat[WEB]{\scalebox{0.8}{\input{fig/csd_web.tex}}%
    \label{fig:csd_web}}
\hfil
\subfloat[CODE]{\scalebox{0.8}{\input{fig/csd_code.tex}}%
    \label{fig:csd_code}}
\vspace{-2em}
\begin{tikzpicture}[x=1pt,y=1pt]
\definecolor{fillColor}{RGB}{255,255,255}
\path[use as bounding box,fill=fillColor,fill opacity=0.00] (0,0) rectangle (433.62, 72.27);
\begin{scope}
\path[clip] (  0.00,  0.00) rectangle (433.62, 72.27);
\definecolor{drawColor}{RGB}{248,118,109}

\path[draw=drawColor,line width= 0.6pt,line join=round] ( 13.89, 36.13) -- ( 27.76, 36.13);
\end{scope}
\begin{scope}
\path[clip] (  0.00,  0.00) rectangle (433.62, 72.27);
\definecolor{drawColor}{RGB}{205,150,0}

\path[draw=drawColor,line width= 0.6pt,dash pattern=on 2pt off 2pt ,line join=round] ( 55.62, 36.13) -- ( 69.50, 36.13);
\end{scope}
\begin{scope}
\path[clip] (  0.00,  0.00) rectangle (433.62, 72.27);
\definecolor{drawColor}{RGB}{124,174,0}

\path[draw=drawColor,line width= 0.6pt,dash pattern=on 4pt off 2pt ,line join=round] (123.02, 36.13) -- (136.90, 36.13);
\end{scope}
\begin{scope}
\path[clip] (  0.00,  0.00) rectangle (433.62, 72.27);
\definecolor{drawColor}{RGB}{0,190,103}

\path[draw=drawColor,line width= 0.6pt,dash pattern=on 4pt off 4pt ,line join=round] (173.68, 36.13) -- (187.56, 36.13);
\end{scope}
\begin{scope}
\path[clip] (  0.00,  0.00) rectangle (433.62, 72.27);
\definecolor{drawColor}{RGB}{0,191,196}

\path[draw=drawColor,line width= 0.6pt,dash pattern=on 1pt off 3pt ,line join=round] (215.97, 36.13) -- (229.85, 36.13);
\end{scope}
\begin{scope}
\path[clip] (  0.00,  0.00) rectangle (433.62, 72.27);
\definecolor{drawColor}{RGB}{0,169,255}

\path[draw=drawColor,line width= 0.6pt,dash pattern=on 1pt off 3pt on 4pt off 3pt ,line join=round] (261.05, 36.13) -- (274.92, 36.13);
\end{scope}
\begin{scope}
\path[clip] (  0.00,  0.00) rectangle (433.62, 72.27);
\definecolor{drawColor}{RGB}{199,124,255}

\path[draw=drawColor,line width= 0.6pt,dash pattern=on 7pt off 3pt ,line join=round] (315.61, 36.13) -- (329.49, 36.13);
\end{scope}
\begin{scope}
\path[clip] (  0.00,  0.00) rectangle (433.62, 72.27);
\definecolor{drawColor}{RGB}{255,97,204}

\path[draw=drawColor,line width= 0.6pt,dash pattern=on 2pt off 2pt on 6pt off 2pt ,line join=round] (366.26, 36.13) -- (380.13, 36.13);
\end{scope}
\begin{scope}
\path[clip] (  0.00,  0.00) rectangle (433.62, 72.27);
\definecolor{drawColor}{RGB}{0,0,0}

\node[text=drawColor,anchor=base west,inner sep=0pt, outer sep=0pt, scale=  1.00] at ( 35.00, 32.69) {AE};
\end{scope}
\begin{scope}
\path[clip] (  0.00,  0.00) rectangle (433.62, 72.27);
\definecolor{drawColor}{RGB}{0,0,0}

\node[text=drawColor,anchor=base west,inner sep=0pt, outer sep=0pt, scale=  1.00] at ( 76.73, 32.69) {Buzhash};
\end{scope}
\begin{scope}
\path[clip] (  0.00,  0.00) rectangle (433.62, 72.27);
\definecolor{drawColor}{RGB}{0,0,0}

\node[text=drawColor,anchor=base west,inner sep=0pt, outer sep=0pt, scale=  1.00] at (144.13, 32.69) {Gear};
\end{scope}
\begin{scope}
\path[clip] (  0.00,  0.00) rectangle (433.62, 72.27);
\definecolor{drawColor}{RGB}{0,0,0}

\node[text=drawColor,anchor=base west,inner sep=0pt, outer sep=0pt, scale=  1.00] at (194.79, 32.69) {MII};
\end{scope}
\begin{scope}
\path[clip] (  0.00,  0.00) rectangle (433.62, 72.27);
\definecolor{drawColor}{RGB}{0,0,0}

\node[text=drawColor,anchor=base west,inner sep=0pt, outer sep=0pt, scale=  1.00] at (237.08, 32.69) {PCI};
\end{scope}
\begin{scope}
\path[clip] (  0.00,  0.00) rectangle (433.62, 72.27);
\definecolor{drawColor}{RGB}{0,0,0}

\node[text=drawColor,anchor=base west,inner sep=0pt, outer sep=0pt, scale=  1.00] at (282.16, 32.69) {Rabin};
\end{scope}
\begin{scope}
\path[clip] (  0.00,  0.00) rectangle (433.62, 72.27);
\definecolor{drawColor}{RGB}{0,0,0}

\node[text=drawColor,anchor=base west,inner sep=0pt, outer sep=0pt, scale=  1.00] at (336.72, 32.69) {RAM};
\end{scope}
\begin{scope}
\path[clip] (  0.00,  0.00) rectangle (433.62, 72.27);
\definecolor{drawColor}{RGB}{0,0,0}

\node[text=drawColor,anchor=base west,inner sep=0pt, outer sep=0pt, scale=  1.00] at (387.37, 32.69) {SeqCDC};
\end{scope}
\end{tikzpicture}
\vspace{-2em}
  \caption{Chunk-size distributions for target chunk size $\mu=1024$ truncated at $[0, 4\mu]$.}
  \label{fig:csd_datasets}
\end{figure*}

\section{Chunk Size Distribution}\label{sec:distribution}

Chunk-size distribution affects deduplication ratio, computational efficiency, and storage savings.
Ideally, chunks have empirical mean $\bar{c}$ close to the target $\mu$ without excessive variance: overly large chunks miss deduplication opportunities, while overly small chunks increase processing and metadata overhead.
Because some variation is still needed to recover from boundary shifts, these distributions reveal both tuning accuracy and robustness across datasets with different byte patterns and compressibility.

We characterize each \ac{CDC} algorithm's chunk-size distribution using two metrics:
\begin{enumerate}
	\item The empirical mean $\bar{c}$, whose proximity to $\mu$ indicates tuning accuracy.
	\item The empirical standard deviation $s$, which measures spread around the mean.
\end{enumerate}

\subsection{General Overview}
\label{subsec:csd:general-overview}

We first compare the empirical chunk-size distributions for all algorithms on the realistic datasets at target chunk size $\mu=1024$.
\Cref{fig:csd_datasets} presents these distributions as empirical cumulative distribution functions (CDFs), truncated to the range $[0,4\mu]$ for readability.
For most algorithms and datasets, the empirical CDF is already close to 1 near \qty{4}{KiB}, indicating that almost all chunks remain within four times the target size.
The CODE dataset is the main exception: in \Cref{fig:csd_code}, PCI and RAM show pathological size distributions, with approximately \qty{15}{\%} of chunks exceeding $4\mu$.
More generally, the BSW algorithms produce many below-target chunks together with heavy-tailed distributions, especially on more compressible datasets.
This is because all BSW algorithms follow the same stochastic property:
Each position of the data stream, looked upon independently,
is equally likely to become a chunk cut-point (\cf \Cref{sec:background}).
However, each position's probability also depends on all previous positions not having fulfilled the same matching condition.
Therefore, we observe that these distributions produce the bulk of chunks at a small size,
determined by the window size or other feature of the algorithm,
and then decay, forming a heavy-tailed distribution.
This shifts the mean closer to the target, but
inevitably results in a large chunk-size variance.
AE and RAM, in contrast, impose a lower bound through their horizon, so complete chunk sizes cannot fall far below the target $\mu$.
This lower bound does not constrain the upper tail: AE concentrates chunk sizes near the target, whereas RAM can still produce pathological distributions on realistic datasets.

\begin{table*}[!t]
	\caption{Extensive results on empirical means and standard deviations of produced chunk sizes.}
	\centering
	\definecolor{tempcolor1}{rgb}{1,0.96,0.94} 
\definecolor{tempcolor2}{rgb}{0.99,0.88,0.82} 
\definecolor{tempcolor3}{rgb}{0.79,0.09,0.11} 
\definecolor{tempcolor4}{rgb}{0.98,0.41,0.29} 
\definecolor{tempcolor5}{rgb}{0.98,0.57,0.45} 
\definecolor{tempcolor6}{rgb}{0.4,0,0.05} 
\definecolor{tempcolor7}{rgb}{0.93,0.23,0.17} 
\definecolor{tempcolor8}{rgb}{0.98,0.73,0.63} 
\definecolor{tempcolor9}{rgb}{0.64,0.06,0.08} 
\definecolor{black}{rgb}{0.0, 0.0, 0.0} 
\color{black}
\begin{tabularx}{\textwidth}{Xcrrrrrrrrrrrrrr}
\multirow{2}{*}{Algorithm}&&\multicolumn{2}{c}{512\,B}&&\multicolumn{2}{c}{1024\,B}&&\multicolumn{2}{c}{2048\,B}&&\multicolumn{2}{c}{4096\,B}&&\multicolumn{2}{c}{8192\,B}\\ 
\cline{3-4}\cline{6-7}\cline{9-10}\cline{12-13}\cline{15-16}
&&$\bar{c}$&$s$&&$\bar{c}$&$s$&&$\bar{c}$&$s$&&$\bar{c}$&$s$&&$\bar{c}$&$s$\\ 
\hline\noalign{\smallskip}

\textbf{RANDOM}&&&&&&&&&&&&&&&\\ 
Rabin&&\cellcolor{tempcolor1}542&\cellcolor{tempcolor5}511&&\cellcolor{tempcolor1}1054&\cellcolor{tempcolor5}1023&&\cellcolor{tempcolor1}2078&\cellcolor{tempcolor4}\color{white}2046&&\cellcolor{tempcolor1}4123&\cellcolor{tempcolor4}\color{white}4096&&\cellcolor{tempcolor1}8213&\cellcolor{tempcolor4}\color{white}8185\\ 
Buzhash&&\cellcolor{tempcolor2}544&\cellcolor{tempcolor5}512&&\cellcolor{tempcolor1}1057&\cellcolor{tempcolor5}1026&&\cellcolor{tempcolor1}2084&\cellcolor{tempcolor4}\color{white}2053&&\cellcolor{tempcolor1}4137&\cellcolor{tempcolor4}\color{white}4106&&\cellcolor{tempcolor1}8240&\cellcolor{tempcolor4}\color{white}8208\\ 
Gear&&\cellcolor{tempcolor1}511&\cellcolor{tempcolor5}510&&\cellcolor{tempcolor1}1024&\cellcolor{tempcolor4}\color{white}1022&&\cellcolor{tempcolor1}2047&\cellcolor{tempcolor4}\color{white}2045&&\cellcolor{tempcolor1}4094&\cellcolor{tempcolor4}\color{white}4088&&\cellcolor{tempcolor1}8197&\cellcolor{tempcolor4}\color{white}8181\\ 
AE&&\cellcolor{tempcolor1}512&\cellcolor{tempcolor1}136&&\cellcolor{tempcolor1}1024&\cellcolor{tempcolor1}209&&\cellcolor{tempcolor1}2048&\cellcolor{tempcolor1}252&&\cellcolor{tempcolor1}4095&\cellcolor{tempcolor1}255&&\cellcolor{tempcolor1}8191&\cellcolor{tempcolor1}255\\ 
RAM&&\cellcolor{tempcolor2}544&\cellcolor{tempcolor2}234&&\cellcolor{tempcolor1}1030&\cellcolor{tempcolor1}252&&\cellcolor{tempcolor1}2048&\cellcolor{tempcolor1}255&&\cellcolor{tempcolor1}4096&\cellcolor{tempcolor1}255&&\cellcolor{tempcolor1}8193&\cellcolor{tempcolor1}255\\ 
PCI&&\cellcolor{tempcolor1}506&\cellcolor{tempcolor5}465&&\cellcolor{tempcolor1}1017&\cellcolor{tempcolor5}990&&\cellcolor{tempcolor1}2085&\cellcolor{tempcolor4}\color{white}2038&&\cellcolor{tempcolor1}4290&\cellcolor{tempcolor4}\color{white}4257&&\cellcolor{tempcolor1}8693&\cellcolor{tempcolor4}\color{white}8637\\ 
MII&&\cellcolor{tempcolor3}\color{white}887&\cellcolor{tempcolor5}881&&\cellcolor{tempcolor2}887&\cellcolor{tempcolor4}\color{white}881&&\cellcolor{tempcolor7}\color{white}887&\cellcolor{tempcolor4}\color{white}881&&\cellcolor{tempcolor4}\color{white}6242&\cellcolor{tempcolor4}\color{white}6238&&\cellcolor{tempcolor8}6242&\cellcolor{tempcolor4}\color{white}6238\\ 
SeqCDC&&\cellcolor{tempcolor1}506&\cellcolor{tempcolor7}\color{white}698&&\cellcolor{tempcolor1}1026&\cellcolor{tempcolor4}\color{white}1036&&\cellcolor{tempcolor1}2069&\cellcolor{tempcolor4}\color{white}2185&&\cellcolor{tempcolor1}4110&\cellcolor{tempcolor4}\color{white}4578&&\cellcolor{tempcolor1}8544&\cellcolor{tempcolor4}\color{white}9024\\ 
\hline\noalign{\smallskip}
\textbf{CODE}&&&&&&&&&&&&&&&\\ 
Rabin&&\cellcolor{tempcolor4}\color{white}278&\cellcolor{tempcolor6}\color{white}1201&&\cellcolor{tempcolor7}\color{white}364&\cellcolor{tempcolor6}\color{white}1507&&\cellcolor{tempcolor3}\color{white}427&\cellcolor{tempcolor6}\color{white}1869&&\cellcolor{tempcolor9}\color{white}478&\cellcolor{tempcolor6}\color{white}2471&&\cellcolor{tempcolor9}\color{white}517&\cellcolor{tempcolor6}\color{white}3475\\ 
Buzhash&&\cellcolor{tempcolor5}351&\cellcolor{tempcolor6}\color{white}1317&&\cellcolor{tempcolor4}\color{white}476&\cellcolor{tempcolor6}\color{white}1668&&\cellcolor{tempcolor3}\color{white}628&\cellcolor{tempcolor6}\color{white}2395&&\cellcolor{tempcolor9}\color{white}720&\cellcolor{tempcolor6}\color{white}3234&&\cellcolor{tempcolor9}\color{white}775&\cellcolor{tempcolor6}\color{white}4437\\ 
Gear&&\cellcolor{tempcolor2}572&\cellcolor{tempcolor6}\color{white}1718&&\cellcolor{tempcolor2}1165&\cellcolor{tempcolor6}\color{white}2706&&\cellcolor{tempcolor2}2305&\cellcolor{tempcolor6}\color{white}4496&&\cellcolor{tempcolor2}4706&\cellcolor{tempcolor9}\color{white}8450&&\cellcolor{tempcolor2}9358&\cellcolor{tempcolor9}\color{white}15605\\ 
AE&&\cellcolor{tempcolor1}484&\cellcolor{tempcolor1}147&&\cellcolor{tempcolor1}1046&\cellcolor{tempcolor1}313&&\cellcolor{tempcolor2}2228&\cellcolor{tempcolor2}588&&\cellcolor{tempcolor2}4594&\cellcolor{tempcolor2}1135&&\cellcolor{tempcolor2}9332&\cellcolor{tempcolor2}2266\\ 
RAM&&\cellcolor{tempcolor6}\color{white}32663&\cellcolor{tempcolor6}\color{white}3620803&&\cellcolor{tempcolor6}\color{white}61202&\cellcolor{tempcolor6}\color{white}4955982&&\cellcolor{tempcolor6}\color{white}120909&\cellcolor{tempcolor6}\color{white}6989936&&\cellcolor{tempcolor6}\color{white}219247&\cellcolor{tempcolor6}\color{white}9411473&&\cellcolor{tempcolor6}\color{white}387964&\cellcolor{tempcolor6}\color{white}12516487\\ 
PCI&&\cellcolor{tempcolor6}\color{white}6609&\cellcolor{tempcolor6}\color{white}116719&&\cellcolor{tempcolor6}\color{white}10722&\cellcolor{tempcolor6}\color{white}259899&&\cellcolor{tempcolor6}\color{white}22598&\cellcolor{tempcolor6}\color{white}948426&&\cellcolor{tempcolor6}\color{white}30252&\cellcolor{tempcolor6}\color{white}883873&&\cellcolor{tempcolor6}\color{white}44879&\cellcolor{tempcolor6}\color{white}1627986\\ 
MII&&\cellcolor{tempcolor5}333&\cellcolor{tempcolor6}\color{white}2977&&\cellcolor{tempcolor7}\color{white}333&\cellcolor{tempcolor6}\color{white}2977&&\cellcolor{tempcolor9}\color{white}333&\cellcolor{tempcolor6}\color{white}2977&&\cellcolor{tempcolor9}\color{white}481&\cellcolor{tempcolor6}\color{white}6299&&\cellcolor{tempcolor6}\color{white}481&\cellcolor{tempcolor6}\color{white}6299\\ 
SeqCDC&&\cellcolor{tempcolor7}\color{white}191&\cellcolor{tempcolor6}\color{white}1521&&\cellcolor{tempcolor3}\color{white}318&\cellcolor{tempcolor6}\color{white}2794&&\cellcolor{tempcolor9}\color{white}373&\cellcolor{tempcolor6}\color{white}3199&&\cellcolor{tempcolor9}\color{white}420&\cellcolor{tempcolor6}\color{white}3731&&\cellcolor{tempcolor9}\color{white}561&\cellcolor{tempcolor6}\color{white}4782\\ 
\hline\noalign{\smallskip}
\textbf{WEB}&&&&&&&&&&&&&&&\\ 
Rabin&&\cellcolor{tempcolor2}444&\cellcolor{tempcolor4}\color{white}525&&\cellcolor{tempcolor8}717&\cellcolor{tempcolor7}\color{white}1002&&\cellcolor{tempcolor4}\color{white}1071&\cellcolor{tempcolor9}\color{white}1868&&\cellcolor{tempcolor7}\color{white}1422&\cellcolor{tempcolor6}\color{white}3255&&\cellcolor{tempcolor3}\color{white}1702&\cellcolor{tempcolor6}\color{white}5231\\ 
Buzhash&&\cellcolor{tempcolor2}551&\cellcolor{tempcolor5}542&&\cellcolor{tempcolor1}1058&\cellcolor{tempcolor4}\color{white}1082&&\cellcolor{tempcolor1}2090&\cellcolor{tempcolor4}\color{white}2195&&\cellcolor{tempcolor1}4009&\cellcolor{tempcolor4}\color{white}4252&&\cellcolor{tempcolor1}7740&\cellcolor{tempcolor7}\color{white}9274\\ 
Gear&&\cellcolor{tempcolor1}523&\cellcolor{tempcolor4}\color{white}561&&\cellcolor{tempcolor1}1039&\cellcolor{tempcolor4}\color{white}1103&&\cellcolor{tempcolor1}2076&\cellcolor{tempcolor4}\color{white}2173&&\cellcolor{tempcolor1}4189&\cellcolor{tempcolor4}\color{white}4491&&\cellcolor{tempcolor1}8498&\cellcolor{tempcolor4}\color{white}9534\\ 
AE&&\cellcolor{tempcolor1}517&\cellcolor{tempcolor1}147&&\cellcolor{tempcolor1}1048&\cellcolor{tempcolor1}245&&\cellcolor{tempcolor1}2124&\cellcolor{tempcolor1}374&&\cellcolor{tempcolor1}4224&\cellcolor{tempcolor1}536&&\cellcolor{tempcolor1}8406&\cellcolor{tempcolor1}904\\ 
RAM&&\cellcolor{tempcolor5}672&\cellcolor{tempcolor6}\color{white}53567&&\cellcolor{tempcolor8}1256&\cellcolor{tempcolor6}\color{white}73244&&\cellcolor{tempcolor8}2435&\cellcolor{tempcolor6}\color{white}101956&&\cellcolor{tempcolor2}4764&\cellcolor{tempcolor6}\color{white}142598&&\cellcolor{tempcolor2}9414&\cellcolor{tempcolor6}\color{white}200436\\ 
PCI&&\cellcolor{tempcolor8}357&\cellcolor{tempcolor6}\color{white}2007&&\cellcolor{tempcolor4}\color{white}491&\cellcolor{tempcolor6}\color{white}3523&&\cellcolor{tempcolor7}\color{white}739&\cellcolor{tempcolor6}\color{white}4617&&\cellcolor{tempcolor3}\color{white}994&\cellcolor{tempcolor6}\color{white}7760&&\cellcolor{tempcolor9}\color{white}1323&\cellcolor{tempcolor6}\color{white}15125\\ 
MII&&\cellcolor{tempcolor7}\color{white}841&\cellcolor{tempcolor3}\color{white}1304&&\cellcolor{tempcolor2}841&\cellcolor{tempcolor3}\color{white}1304&&\cellcolor{tempcolor7}\color{white}841&\cellcolor{tempcolor3}\color{white}1304&&\cellcolor{tempcolor8}4970&\cellcolor{tempcolor9}\color{white}8797&&\cellcolor{tempcolor5}4970&\cellcolor{tempcolor9}\color{white}8797\\ 
SeqCDC&&\cellcolor{tempcolor2}424&\cellcolor{tempcolor9}\color{white}744&&\cellcolor{tempcolor2}943&\cellcolor{tempcolor3}\color{white}1424&&\cellcolor{tempcolor2}1805&\cellcolor{tempcolor3}\color{white}2665&&\cellcolor{tempcolor2}3377&\cellcolor{tempcolor3}\color{white}5158&&\cellcolor{tempcolor8}6572&\cellcolor{tempcolor3}\color{white}9467\\ 
\hline\noalign{\smallskip}
\textbf{DB}&&&&&&&&&&&&&&&\\ 
Rabin&&\cellcolor{tempcolor9}\color{white}95&\cellcolor{tempcolor6}\color{white}277&&\cellcolor{tempcolor9}\color{white}101&\cellcolor{tempcolor6}\color{white}402&&\cellcolor{tempcolor6}\color{white}104&\cellcolor{tempcolor6}\color{white}571&&\cellcolor{tempcolor6}\color{white}106&\cellcolor{tempcolor6}\color{white}796&&\cellcolor{tempcolor6}\color{white}107&\cellcolor{tempcolor6}\color{white}1081\\ 
Buzhash&&\cellcolor{tempcolor7}\color{white}823&\cellcolor{tempcolor6}\color{white}47860&&\cellcolor{tempcolor4}\color{white}1579&\cellcolor{tempcolor6}\color{white}66862&&\cellcolor{tempcolor4}\color{white}3050&\cellcolor{tempcolor6}\color{white}93309&&\cellcolor{tempcolor5}5670&\cellcolor{tempcolor6}\color{white}127677&&\cellcolor{tempcolor8}10108&\cellcolor{tempcolor6}\color{white}172277\\ 
Gear&&\cellcolor{tempcolor1}524&\cellcolor{tempcolor6}\color{white}38456&&\cellcolor{tempcolor4}\color{white}1531&\cellcolor{tempcolor6}\color{white}65776&&\cellcolor{tempcolor4}\color{white}3006&\cellcolor{tempcolor6}\color{white}92557&&\cellcolor{tempcolor4}\color{white}6162&\cellcolor{tempcolor6}\color{white}134051&&\cellcolor{tempcolor4}\color{white}12129&\cellcolor{tempcolor6}\color{white}190428\\ 
AE&&\cellcolor{tempcolor2}448&\cellcolor{tempcolor2}220&&\cellcolor{tempcolor2}949&\cellcolor{tempcolor2}339&&\cellcolor{tempcolor1}2079&\cellcolor{tempcolor2}618&&\cellcolor{tempcolor1}4328&\cellcolor{tempcolor2}1107&&\cellcolor{tempcolor2}8780&\cellcolor{tempcolor2}2180\\ 
RAM&&\cellcolor{tempcolor3}\color{white}886&\cellcolor{tempcolor6}\color{white}58702&&\cellcolor{tempcolor3}\color{white}1741&\cellcolor{tempcolor6}\color{white}82555&&\cellcolor{tempcolor3}\color{white}3601&\cellcolor{tempcolor6}\color{white}117545&&\cellcolor{tempcolor3}\color{white}7147&\cellcolor{tempcolor6}\color{white}166900&&\cellcolor{tempcolor3}\color{white}13830&\cellcolor{tempcolor6}\color{white}231920\\ 
PCI&&\cellcolor{tempcolor6}\color{white}4481&\cellcolor{tempcolor6}\color{white}376356&&\cellcolor{tempcolor9}\color{white}1968&\cellcolor{tempcolor6}\color{white}264093&&\cellcolor{tempcolor6}\color{white}8546&\cellcolor{tempcolor6}\color{white}1670258&&\cellcolor{tempcolor9}\color{white}7853&\cellcolor{tempcolor6}\color{white}1121784&&\cellcolor{tempcolor5}11720&\cellcolor{tempcolor6}\color{white}2037800\\ 
MII&&\cellcolor{tempcolor6}\color{white}1490&\cellcolor{tempcolor6}\color{white}69072&&\cellcolor{tempcolor4}\color{white}1490&\cellcolor{tempcolor6}\color{white}69072&&\cellcolor{tempcolor8}1490&\cellcolor{tempcolor6}\color{white}69072&&\cellcolor{tempcolor6}\color{white}8724&\cellcolor{tempcolor6}\color{white}185368&&\cellcolor{tempcolor2}8724&\cellcolor{tempcolor6}\color{white}185368\\ 
SeqCDC&&\cellcolor{tempcolor8}631&\cellcolor{tempcolor6}\color{white}43950&&\cellcolor{tempcolor4}\color{white}1591&\cellcolor{tempcolor6}\color{white}71362&&\cellcolor{tempcolor5}2785&\cellcolor{tempcolor6}\color{white}95023&&\cellcolor{tempcolor8}5039&\cellcolor{tempcolor6}\color{white}128883&&\cellcolor{tempcolor2}9401&\cellcolor{tempcolor6}\color{white}180942\\ 
\hline\noalign{\smallskip}
\textbf{VMB}&&&&&&&&&&&&&&&\\ 
Rabin&&\cellcolor{tempcolor7}\color{white}170&\cellcolor{tempcolor6}\color{white}653&&\cellcolor{tempcolor3}\color{white}194&\cellcolor{tempcolor6}\color{white}843&&\cellcolor{tempcolor9}\color{white}212&\cellcolor{tempcolor6}\color{white}1143&&\cellcolor{tempcolor6}\color{white}222&\cellcolor{tempcolor6}\color{white}1549&&\cellcolor{tempcolor6}\color{white}227&\cellcolor{tempcolor6}\color{white}2144\\ 
Buzhash&&\cellcolor{tempcolor5}676&\cellcolor{tempcolor6}\color{white}16728&&\cellcolor{tempcolor8}1312&\cellcolor{tempcolor6}\color{white}23348&&\cellcolor{tempcolor8}2455&\cellcolor{tempcolor6}\color{white}32106&&\cellcolor{tempcolor8}4949&\cellcolor{tempcolor6}\color{white}46735&&\cellcolor{tempcolor2}9063&\cellcolor{tempcolor6}\color{white}66011\\ 
Gear&&\cellcolor{tempcolor2}602&\cellcolor{tempcolor6}\color{white}15855&&\cellcolor{tempcolor8}1222&\cellcolor{tempcolor6}\color{white}23730&&\cellcolor{tempcolor8}2435&\cellcolor{tempcolor6}\color{white}33795&&\cellcolor{tempcolor8}4867&\cellcolor{tempcolor6}\color{white}48583&&\cellcolor{tempcolor2}9184&\cellcolor{tempcolor6}\color{white}69614\\ 
AE&&\cellcolor{tempcolor2}471&\cellcolor{tempcolor1}163&&\cellcolor{tempcolor1}1047&\cellcolor{tempcolor2}352&&\cellcolor{tempcolor2}2297&\cellcolor{tempcolor2}706&&\cellcolor{tempcolor2}4807&\cellcolor{tempcolor2}1438&&\cellcolor{tempcolor2}9542&\cellcolor{tempcolor2}2664\\ 
RAM&&\cellcolor{tempcolor6}\color{white}1298&\cellcolor{tempcolor6}\color{white}112214&&\cellcolor{tempcolor6}\color{white}2603&\cellcolor{tempcolor6}\color{white}160204&&\cellcolor{tempcolor6}\color{white}5141&\cellcolor{tempcolor6}\color{white}256212&&\cellcolor{tempcolor6}\color{white}9595&\cellcolor{tempcolor6}\color{white}351555&&\cellcolor{tempcolor6}\color{white}17255&\cellcolor{tempcolor6}\color{white}415735\\ 
PCI&&\cellcolor{tempcolor6}\color{white}2150&\cellcolor{tempcolor6}\color{white}93943&&\cellcolor{tempcolor6}\color{white}3166&\cellcolor{tempcolor6}\color{white}127712&&\cellcolor{tempcolor6}\color{white}5365&\cellcolor{tempcolor6}\color{white}167882&&\cellcolor{tempcolor4}\color{white}6063&\cellcolor{tempcolor6}\color{white}197997&&\cellcolor{tempcolor2}8925&\cellcolor{tempcolor6}\color{white}253655\\ 
MII&&\cellcolor{tempcolor6}\color{white}1619&\cellcolor{tempcolor6}\color{white}34972&&\cellcolor{tempcolor7}\color{white}1619&\cellcolor{tempcolor6}\color{white}34972&&\cellcolor{tempcolor8}1619&\cellcolor{tempcolor6}\color{white}34972&&\cellcolor{tempcolor6}\color{white}10623&\cellcolor{tempcolor6}\color{white}122867&&\cellcolor{tempcolor8}10623&\cellcolor{tempcolor6}\color{white}122867\\ 
SeqCDC&&\cellcolor{tempcolor1}503&\cellcolor{tempcolor6}\color{white}17570&&\cellcolor{tempcolor4}\color{white}1538&\cellcolor{tempcolor6}\color{white}34052&&\cellcolor{tempcolor2}2391&\cellcolor{tempcolor6}\color{white}42843&&\cellcolor{tempcolor2}3717&\cellcolor{tempcolor6}\color{white}57142&&\cellcolor{tempcolor8}6565&\cellcolor{tempcolor6}\color{white}77491\\ 
\hline
\end{tabularx}

	\label{tab:csd_full}
\end{table*}

\subsection{Mean and Variance}

We report the mean chunk size $\bar{c}$ and standard deviation $s$ across all datasets and algorithms. 
A detailed quantitative table is provided in \Cref{tab:csd_full}.
Cell colors follow a continuous scale, with lighter shades indicating values closer to the optimum, which for $\bar{c}$ corresponds to $\mu$.
The color range reaches its maximum where the mean deviates $\pm\qty{100}{\%}$ from the target $\mu$.
For the standard deviation $s$,
the color range is relative to the empirical mean $\bar{c}$, in a range $[0, 2\bar{c}]$.
Values are aggregated across all evaluated target chunk sizes.
Recall from \Cref{sec:setup} that we examine algorithms without constraining chunk sizes through minimum or maximum thresholds.


We make the following observations:
Firstly, RAND provides an empirical calibration check for AE and MII.
For RAM, the measured means agree closely with the expectation in \Cref{eq:ram} at the benchmark horizons.
In particular, the configuration at a \qty{512}{B} target produces a mean of \qty{544}{B}, consistent with the parameter-selection approximation error quantified in \Cref{sec:analysis:ram}.
In general, all algorithms perform predictably within acceptable bounds on artificial pseudorandom data.
On real-world datasets, results are much more varied.
For many algorithms, the more compressible datasets tend to yield greater chunk-size variance.
The compression ratios in \Cref{table:datasets} summarize dataset compressibility.
Often, with hash-based algorithms, we furthermore observe an increase in variance when target chunk sizes are higher.

\paragraph*{Hash-Based Algorithms}
Different hash functions within hash-based algorithms yield slightly different chunk-size distributions.
While Gear tends to maintain mean sizes closer to the target,
all hash-based methods struggle with chunk-size variance.
In particular, Rabin and Buzhash perform poorly on the more compressible datasets DB, CODE, and VMB.
One hypothesis is that Gear's chunking rule interacts more favorably with the byte patterns in these datasets; isolating that mechanism requires further experiments.


%

\paragraph*{Hashless Algorithms}
The lowest levels of variance overall are produced by AE.
RAM achieves similarly low variance only on the RAND dataset.
On the realistic datasets, RAM often fails to find chunk boundaries frequently enough, resulting in pathological chunk-size distributions.
The authors of RAM warn against poor performance on files containing low-entropy data.
In their paper~\cite{ram}, the experiments yield very similar results in comparison with AE.
We find pathological performance on all realistic datasets, including WEB, whose Gzip compression ratio is comparatively high.
SeqCDC also produces means close to the target on all datasets except CODE, indicating a potential sensitivity to CODE's byte patterns.

\paragraph*{CDC for Data Synchronization}
PCI exhibits a significant deviation from the target chunk size and high chunk-size variance across most settings.
Its predecessor, MII, although restrictive in the targets it can tune to, performs slightly better in this regard.
However, since chunk-size variance is not a critical concern in the application for data synchronization, this may not be a substantial drawback.

\subsection{Key Takeaways}
\label{subsec:csd:takeaways}

The results on synthetic and real-world datasets lead to the following conclusions about CDC chunk-size distributions:

\paragraph*{Chunk-Size Variance}
Almost all algorithms produce many small chunks near an algorithm-defined minimum, followed by a heavy tail that moves the mean closer to the target.
Rolling hash-based algorithms use relatively small windows and generally exhibit longer tails and greater variance.
AE and RAM both impose a horizon-based lower bound near the configured target, reducing the opportunity for very small chunks. Only AE consistently achieves low variance in our realistic workloads; RAM's long upper tails remain pathological despite that lower bound.

\paragraph*{Performance on Real-World Datasets}
We find that all algorithms can be tuned to perform well on pseudorandom data,
which often forms the basis for their theoretical analyses.
On real-world data, however, we see mixed results, depending on the algorithm and dataset combination.
Among the evaluated base chunking algorithms, AE is the most consistent across datasets, with Gear and SeqCDC also performing well on several metrics.


\section{Deduplication Ratio}\label{sec:dedup}

\begin{figure*}[ht]
\centering
\subfloat[VMB]{\scalebox{0.85}{\input{fig/dedup_overview_vmb}}%
    \label{fig:dedup_overview:vmb}}
\hfil
\subfloat[DB]{\scalebox{0.85}{\input{fig/dedup_overview_db}}%
	\label{fig:dedup_overview:db}}
\hfil
\subfloat[WEB]{\scalebox{0.85}{\input{fig/dedup_overview_web}}%
    \label{fig:dedup_overview:web}}
\hfil
\subfloat[CODE]{\scalebox{0.85}{\input{fig/dedup_overview_code}}%
    \label{fig:dedup_overview:code}}
 
\vspace{-1.6em}
\begin{tikzpicture}[x=1pt,y=1pt]
\definecolor{fillColor}{RGB}{255,255,255}
\path[use as bounding box,fill=fillColor,fill opacity=0.00] (0,0) rectangle (505.89, 72.27);
\begin{scope}
\path[clip] (  0.00,  0.00) rectangle (505.89, 72.27);
\definecolor{drawColor}{RGB}{27,158,119}

\path[draw=drawColor,line width= 0.9pt,line join=round] ( 25.81, 36.13) -- ( 39.69, 36.13);
\end{scope}
\begin{scope}
\path[clip] (  0.00,  0.00) rectangle (505.89, 72.27);
\definecolor{drawColor}{RGB}{27,158,119}
\definecolor{fillColor}{RGB}{255,255,255}

\path[draw=drawColor,line width= 0.4pt,line join=round,line cap=round,fill=fillColor] ( 32.75, 36.13) circle (  1.96);
\end{scope}
\begin{scope}
\path[clip] (  0.00,  0.00) rectangle (505.89, 72.27);
\definecolor{drawColor}{RGB}{217,95,2}

\path[draw=drawColor,line width= 0.9pt,dash pattern=on 4pt off 4pt ,line join=round] ( 67.55, 36.13) -- ( 81.42, 36.13);
\end{scope}
\begin{scope}
\path[clip] (  0.00,  0.00) rectangle (505.89, 72.27);
\definecolor{drawColor}{RGB}{217,95,2}
\definecolor{fillColor}{RGB}{255,255,255}

\path[draw=drawColor,line width= 0.4pt,line join=round,line cap=round,fill=fillColor] ( 72.75, 34.40) rectangle ( 76.23, 37.87);
\end{scope}
\begin{scope}
\path[clip] (  0.00,  0.00) rectangle (505.89, 72.27);
\definecolor{drawColor}{RGB}{117,112,179}

\path[draw=drawColor,line width= 0.9pt,dash pattern=on 1pt off 3pt ,line join=round] (134.95, 36.13) -- (148.83, 36.13);
\end{scope}
\begin{scope}
\path[clip] (  0.00,  0.00) rectangle (505.89, 72.27);
\definecolor{drawColor}{RGB}{117,112,179}
\definecolor{fillColor}{RGB}{255,255,255}

\path[draw=drawColor,line width= 0.4pt,line join=round,line cap=round,fill=fillColor] (141.89, 33.68) --
	(144.35, 36.13) --
	(141.89, 38.59) --
	(139.43, 36.13) --
	cycle;
\end{scope}
\begin{scope}
\path[clip] (  0.00,  0.00) rectangle (505.89, 72.27);
\definecolor{drawColor}{RGB}{231,41,138}

\path[draw=drawColor,line width= 0.9pt,dash pattern=on 1pt off 3pt on 4pt off 3pt ,line join=round] (183.37, 36.13) -- (197.25, 36.13);
\end{scope}
\begin{scope}
\path[clip] (  0.00,  0.00) rectangle (505.89, 72.27);
\definecolor{drawColor}{RGB}{231,41,138}
\definecolor{fillColor}{RGB}{255,255,255}

\path[draw=drawColor,line width= 0.4pt,line join=round,line cap=round,fill=fillColor] (190.31, 39.19) --
	(192.95, 34.61) --
	(187.66, 34.61) --
	cycle;
\end{scope}
\begin{scope}
\path[clip] (  0.00,  0.00) rectangle (505.89, 72.27);
\definecolor{drawColor}{RGB}{102,166,30}

\path[draw=drawColor,line width= 0.9pt,dash pattern=on 7pt off 3pt ,line join=round] (234.03, 36.13) -- (247.90, 36.13);
\end{scope}
\begin{scope}
\path[clip] (  0.00,  0.00) rectangle (505.89, 72.27);
\definecolor{drawColor}{RGB}{102,166,30}
\definecolor{fillColor}{RGB}{255,255,255}

\path[draw=drawColor,line width= 0.4pt,line join=round,line cap=round,fill=fillColor] (240.97, 33.08) --
	(243.61, 37.66) --
	(238.32, 37.66) --
	cycle;
\end{scope}
\begin{scope}
\path[clip] (  0.00,  0.00) rectangle (505.89, 72.27);
\definecolor{drawColor}{RGB}{230,171,2}

\path[draw=drawColor,line width= 0.9pt,dash pattern=on 2pt off 2pt on 6pt off 2pt ,line join=round] (276.31, 36.13) -- (290.19, 36.13);
\end{scope}
\begin{scope}
\path[clip] (  0.00,  0.00) rectangle (505.89, 72.27);
\definecolor{drawColor}{RGB}{230,171,2}

\path[draw=drawColor,line width= 0.4pt,line join=round,line cap=round] (283.25, 36.13) circle (  1.96);
\end{scope}
\begin{scope}
\path[clip] (  0.00,  0.00) rectangle (505.89, 72.27);
\definecolor{drawColor}{RGB}{166,118,29}

\path[draw=drawColor,line width= 0.9pt,dash pattern=on 1pt off 3pt ,line join=round] (321.39, 36.13) -- (335.27, 36.13);
\end{scope}
\begin{scope}
\path[clip] (  0.00,  0.00) rectangle (505.89, 72.27);
\definecolor{drawColor}{RGB}{166,118,29}

\path[draw=drawColor,line width= 0.4pt,line join=round,line cap=round] (328.33, 39.19) --
	(330.97, 34.61) --
	(325.69, 34.61) --
	cycle;
\end{scope}
\begin{scope}
\path[clip] (  0.00,  0.00) rectangle (505.89, 72.27);
\definecolor{drawColor}{gray}{0.40}

\path[draw=drawColor,line width= 0.9pt,dash pattern=on 4pt off 4pt ,line join=round] (375.95, 36.13) -- (389.83, 36.13);
\end{scope}
\begin{scope}
\path[clip] (  0.00,  0.00) rectangle (505.89, 72.27);
\definecolor{drawColor}{gray}{0.40}

\path[draw=drawColor,line width= 0.4pt,line join=round,line cap=round] (380.12, 36.13) -- (385.67, 36.13);

\path[draw=drawColor,line width= 0.4pt,line join=round,line cap=round] (382.89, 33.36) -- (382.89, 38.91);
\end{scope}
\begin{scope}
\path[clip] (  0.00,  0.00) rectangle (505.89, 72.27);
\definecolor{drawColor}{RGB}{31,120,180}

\path[draw=drawColor,line width= 0.9pt,dash pattern=on 1pt off 3pt on 4pt off 3pt ,line join=round] (426.60, 36.13) -- (440.48, 36.13);
\end{scope}
\begin{scope}
\path[clip] (  0.00,  0.00) rectangle (505.89, 72.27);
\definecolor{drawColor}{RGB}{31,120,180}

\path[draw=drawColor,line width= 0.4pt,line join=round,line cap=round] (431.58, 34.17) -- (435.50, 38.10);

\path[draw=drawColor,line width= 0.4pt,line join=round,line cap=round] (431.58, 38.10) -- (435.50, 34.17);
\end{scope}
\begin{scope}
\path[clip] (  0.00,  0.00) rectangle (505.89, 72.27);
\definecolor{drawColor}{RGB}{0,0,0}

\node[text=drawColor,anchor=base west,inner sep=0pt, outer sep=0pt, scale=  1.00] at ( 46.92, 32.69) {AE};
\end{scope}
\begin{scope}
\path[clip] (  0.00,  0.00) rectangle (505.89, 72.27);
\definecolor{drawColor}{RGB}{0,0,0}

\node[text=drawColor,anchor=base west,inner sep=0pt, outer sep=0pt, scale=  1.00] at ( 88.66, 32.69) {Buzhash};
\end{scope}
\begin{scope}
\path[clip] (  0.00,  0.00) rectangle (505.89, 72.27);
\definecolor{drawColor}{RGB}{0,0,0}

\node[text=drawColor,anchor=base west,inner sep=0pt, outer sep=0pt, scale=  1.00] at (156.06, 32.69) {FSC};
\end{scope}
\begin{scope}
\path[clip] (  0.00,  0.00) rectangle (505.89, 72.27);
\definecolor{drawColor}{RGB}{0,0,0}

\node[text=drawColor,anchor=base west,inner sep=0pt, outer sep=0pt, scale=  1.00] at (204.48, 32.69) {Gear};
\end{scope}
\begin{scope}
\path[clip] (  0.00,  0.00) rectangle (505.89, 72.27);
\definecolor{drawColor}{RGB}{0,0,0}

\node[text=drawColor,anchor=base west,inner sep=0pt, outer sep=0pt, scale=  1.00] at (255.14, 32.69) {MII};
\end{scope}
\begin{scope}
\path[clip] (  0.00,  0.00) rectangle (505.89, 72.27);
\definecolor{drawColor}{RGB}{0,0,0}

\node[text=drawColor,anchor=base west,inner sep=0pt, outer sep=0pt, scale=  1.00] at (297.42, 32.69) {PCI};
\end{scope}
\begin{scope}
\path[clip] (  0.00,  0.00) rectangle (505.89, 72.27);
\definecolor{drawColor}{RGB}{0,0,0}

\node[text=drawColor,anchor=base west,inner sep=0pt, outer sep=0pt, scale=  1.00] at (342.50, 32.69) {Rabin};
\end{scope}
\begin{scope}
\path[clip] (  0.00,  0.00) rectangle (505.89, 72.27);
\definecolor{drawColor}{RGB}{0,0,0}

\node[text=drawColor,anchor=base west,inner sep=0pt, outer sep=0pt, scale=  1.00] at (397.06, 32.69) {RAM};
\end{scope}
\begin{scope}
\path[clip] (  0.00,  0.00) rectangle (505.89, 72.27);
\definecolor{drawColor}{RGB}{0,0,0}

\node[text=drawColor,anchor=base west,inner sep=0pt, outer sep=0pt, scale=  1.00] at (447.71, 32.69) {SeqCDC};
\end{scope}
\end{tikzpicture}
\vspace{-2em}
  \caption{Deduplication ratios at configured targets; note that the y-axis scales differ between plots.}
  \label{fig:dedup_overview}
\end{figure*}

\begin{figure*}[t]
\centering
\subfloat[VMB]{\scalebox{0.85}{\input{fig/dedup_pareto_vmb}}%
    \label{fig:dedup_pareto:vmb}}
\hfil
\subfloat[DB]{\scalebox{0.85}{\input{fig/dedup_pareto_db}}%
    \label{fig:dedup_pareto:db}}
\hfil
\subfloat[WEB]{\scalebox{0.85}{\input{fig/dedup_pareto_web}}%
    \label{fig:dedup_pareto:web}}
\hfil
\subfloat[CODE]{\scalebox{0.85}{\input{fig/dedup_pareto_code}}%
    \label{fig:dedup_pareto:code}}

\vspace{-1.6em}
\begin{tikzpicture}[x=1pt,y=1pt]
\definecolor{fillColor}{RGB}{255,255,255}
\path[use as bounding box,fill=fillColor,fill opacity=0.00] (0,0) rectangle (505.89, 72.27);
\begin{scope}
\path[clip] (  0.00,  0.00) rectangle (505.89, 72.27);
\definecolor{fillColor}{RGB}{255,255,255}

\path[fill=fillColor] ( 25.98, 23.41) rectangle (479.91, 48.86);
\end{scope}
\begin{scope}
\path[clip] (  0.00,  0.00) rectangle (505.89, 72.27);
\definecolor{fillColor}{RGB}{255,255,255}

\path[fill=fillColor] ( 36.98, 28.91) rectangle ( 51.43, 43.36);
\end{scope}
\begin{scope}
\path[clip] (  0.00,  0.00) rectangle (505.89, 72.27);
\definecolor{drawColor}{RGB}{27,158,119}
\definecolor{fillColor}{RGB}{255,255,255}

\path[draw=drawColor,line width= 0.4pt,line join=round,line cap=round,fill=fillColor] ( 44.20, 36.13) circle (  1.96);
\end{scope}
\begin{scope}
\path[clip] (  0.00,  0.00) rectangle (505.89, 72.27);
\definecolor{fillColor}{RGB}{255,255,255}

\path[fill=fillColor] ( 76.73, 28.91) rectangle ( 91.19, 43.36);
\end{scope}
\begin{scope}
\path[clip] (  0.00,  0.00) rectangle (505.89, 72.27);
\definecolor{drawColor}{RGB}{217,95,2}
\definecolor{fillColor}{RGB}{255,255,255}

\path[draw=drawColor,line width= 0.4pt,line join=round,line cap=round,fill=fillColor] ( 82.22, 34.40) rectangle ( 85.70, 37.87);
\end{scope}
\begin{scope}
\path[clip] (  0.00,  0.00) rectangle (505.89, 72.27);
\definecolor{fillColor}{RGB}{255,255,255}

\path[fill=fillColor] (139.32, 28.91) rectangle (153.77, 43.36);
\end{scope}
\begin{scope}
\path[clip] (  0.00,  0.00) rectangle (505.89, 72.27);
\definecolor{drawColor}{RGB}{117,112,179}
\definecolor{fillColor}{RGB}{255,255,255}

\path[draw=drawColor,line width= 0.4pt,line join=round,line cap=round,fill=fillColor] (146.54, 33.68) --
	(149.00, 36.13) --
	(146.54, 38.59) --
	(144.08, 36.13) --
	cycle;
\end{scope}
\begin{scope}
\path[clip] (  0.00,  0.00) rectangle (505.89, 72.27);
\definecolor{fillColor}{RGB}{255,255,255}

\path[fill=fillColor] (184.07, 28.91) rectangle (198.53, 43.36);
\end{scope}
\begin{scope}
\path[clip] (  0.00,  0.00) rectangle (505.89, 72.27);
\definecolor{drawColor}{RGB}{231,41,138}
\definecolor{fillColor}{RGB}{255,255,255}

\path[draw=drawColor,line width= 0.4pt,line join=round,line cap=round,fill=fillColor] (191.30, 39.19) --
	(193.94, 34.61) --
	(188.66, 34.61) --
	cycle;
\end{scope}
\begin{scope}
\path[clip] (  0.00,  0.00) rectangle (505.89, 72.27);
\definecolor{fillColor}{RGB}{255,255,255}

\path[fill=fillColor] (230.73, 28.91) rectangle (245.18, 43.36);
\end{scope}
\begin{scope}
\path[clip] (  0.00,  0.00) rectangle (505.89, 72.27);
\definecolor{drawColor}{RGB}{102,166,30}
\definecolor{fillColor}{RGB}{255,255,255}

\path[draw=drawColor,line width= 0.4pt,line join=round,line cap=round,fill=fillColor] (237.96, 33.08) --
	(240.60, 37.66) --
	(235.31, 37.66) --
	cycle;
\end{scope}
\begin{scope}
\path[clip] (  0.00,  0.00) rectangle (505.89, 72.27);
\definecolor{fillColor}{RGB}{255,255,255}

\path[fill=fillColor] (272.85, 28.91) rectangle (287.30, 43.36);
\end{scope}
\begin{scope}
\path[clip] (  0.00,  0.00) rectangle (505.89, 72.27);
\definecolor{drawColor}{RGB}{230,171,2}

\path[draw=drawColor,line width= 0.4pt,line join=round,line cap=round] (280.07, 36.13) circle (  1.96);
\end{scope}
\begin{scope}
\path[clip] (  0.00,  0.00) rectangle (505.89, 72.27);
\definecolor{fillColor}{RGB}{255,255,255}

\path[fill=fillColor] (315.93, 28.91) rectangle (330.39, 43.36);
\end{scope}
\begin{scope}
\path[clip] (  0.00,  0.00) rectangle (505.89, 72.27);
\definecolor{drawColor}{RGB}{166,118,29}

\path[draw=drawColor,line width= 0.4pt,line join=round,line cap=round] (323.16, 39.19) --
	(325.80, 34.61) --
	(320.52, 34.61) --
	cycle;
\end{scope}
\begin{scope}
\path[clip] (  0.00,  0.00) rectangle (505.89, 72.27);
\definecolor{fillColor}{RGB}{255,255,255}

\path[fill=fillColor] (367.63, 28.91) rectangle (382.09, 43.36);
\end{scope}
\begin{scope}
\path[clip] (  0.00,  0.00) rectangle (505.89, 72.27);
\definecolor{drawColor}{gray}{0.40}

\path[draw=drawColor,line width= 0.4pt,line join=round,line cap=round] (372.08, 36.13) -- (377.63, 36.13);

\path[draw=drawColor,line width= 0.4pt,line join=round,line cap=round] (374.86, 33.36) -- (374.86, 38.91);
\end{scope}
\begin{scope}
\path[clip] (  0.00,  0.00) rectangle (505.89, 72.27);
\definecolor{fillColor}{RGB}{255,255,255}

\path[fill=fillColor] (417.11, 28.91) rectangle (431.56, 43.36);
\end{scope}
\begin{scope}
\path[clip] (  0.00,  0.00) rectangle (505.89, 72.27);
\definecolor{drawColor}{RGB}{31,120,180}

\path[draw=drawColor,line width= 0.4pt,line join=round,line cap=round] (422.37, 34.17) -- (426.30, 38.10);

\path[draw=drawColor,line width= 0.4pt,line join=round,line cap=round] (422.37, 38.10) -- (426.30, 34.17);
\end{scope}
\begin{scope}
\path[clip] (  0.00,  0.00) rectangle (505.89, 72.27);
\definecolor{drawColor}{RGB}{0,0,0}

\node[text=drawColor,anchor=base west,inner sep=0pt, outer sep=0pt, scale=  1.00] at ( 56.93, 32.69) {AE};
\end{scope}
\begin{scope}
\path[clip] (  0.00,  0.00) rectangle (505.89, 72.27);
\definecolor{drawColor}{RGB}{0,0,0}

\node[text=drawColor,anchor=base west,inner sep=0pt, outer sep=0pt, scale=  1.00] at ( 96.69, 32.69) {Buzhash};
\end{scope}
\begin{scope}
\path[clip] (  0.00,  0.00) rectangle (505.89, 72.27);
\definecolor{drawColor}{RGB}{0,0,0}

\node[text=drawColor,anchor=base west,inner sep=0pt, outer sep=0pt, scale=  1.00] at (159.27, 32.69) {FSC};
\end{scope}
\begin{scope}
\path[clip] (  0.00,  0.00) rectangle (505.89, 72.27);
\definecolor{drawColor}{RGB}{0,0,0}

\node[text=drawColor,anchor=base west,inner sep=0pt, outer sep=0pt, scale=  1.00] at (204.03, 32.69) {Gear};
\end{scope}
\begin{scope}
\path[clip] (  0.00,  0.00) rectangle (505.89, 72.27);
\definecolor{drawColor}{RGB}{0,0,0}

\node[text=drawColor,anchor=base west,inner sep=0pt, outer sep=0pt, scale=  1.00] at (250.68, 32.69) {MII};
\end{scope}
\begin{scope}
\path[clip] (  0.00,  0.00) rectangle (505.89, 72.27);
\definecolor{drawColor}{RGB}{0,0,0}

\node[text=drawColor,anchor=base west,inner sep=0pt, outer sep=0pt, scale=  1.00] at (292.80, 32.69) {PCI};
\end{scope}
\begin{scope}
\path[clip] (  0.00,  0.00) rectangle (505.89, 72.27);
\definecolor{drawColor}{RGB}{0,0,0}

\node[text=drawColor,anchor=base west,inner sep=0pt, outer sep=0pt, scale=  1.00] at (335.89, 32.69) {Rabin};
\end{scope}
\begin{scope}
\path[clip] (  0.00,  0.00) rectangle (505.89, 72.27);
\definecolor{drawColor}{RGB}{0,0,0}

\node[text=drawColor,anchor=base west,inner sep=0pt, outer sep=0pt, scale=  1.00] at (387.59, 32.69) {RAM};
\end{scope}
\begin{scope}
\path[clip] (  0.00,  0.00) rectangle (505.89, 72.27);
\definecolor{drawColor}{RGB}{0,0,0}

\node[text=drawColor,anchor=base west,inner sep=0pt, outer sep=0pt, scale=  1.00] at (437.06, 32.69) {SeqCDC};
\end{scope}
\end{tikzpicture}
\vspace{-2em}
\caption{Empirical mean chunk size and metadata-adjusted storage savings under \qty{64}{B} metadata cost per emitted chunk. Each plot uses its own axis
limits; x-axes are logarithmic; rings mark configurations nondominated among the evaluated points.}
\label{fig:dedup_pareto}
\end{figure*}

The degree of achievable deduplication is a key metric for evaluating CDC algorithms.
It determines how effectively an algorithm identifies and eliminates redundant data,
thereby reducing storage requirements.
We compare the selected algorithms by deduplication ratio and examine how dataset characteristics, such as compressibility and repeated byte patterns, relate to this metric.
The deduplication performance of CDC algorithms is typically measured using the following metric:
$$
\text{Dedup. ratio}=1-\frac{S_{\mathrm{dedup}}}{S_{\mathrm{original}}},
$$
where $S_{\mathrm{original}}$ represents the original size of the dataset and $S_{\mathrm{dedup}}$ represents the \emph{deduplicated} size, which is equivalent to the total size of unique chunks.
The dedup. ratio indicates the fraction of storage space saved by eliminating redundant chunks.
For instance, a deduplication ratio of $0.4$ on a dataset of \qty{1}{GB}
indicates that the same dataset can be represented using \qty{600}{MB} of unique emitted chunks.
Note that this metric excludes any metadata overhead.
Moreover, in \Cref{tab:csd_full}, we show that achieved mean chunk sizes can deviate strongly from the configured target. Small chunks have implications on the potential storage savings as they produce greater metadata overhead.
We therefore complement our study with \Cref{fig:dedup_pareto}, which plots the metadata-adjusted storage savings against the achieved mean chunk size and highlights the Pareto frontier of measured configurations.
For this figure, we employ the more practically oriented metric of \emph{storage savings}:
$$
\text{Storage savings}=1-\frac{S_{\mathrm{dedup}}+\beta N}{S_{\mathrm{original}}},
$$
where $N$ represents the total number of emitted chunks and $\beta$ denotes the metadata cost per chunk.
Fingerprint, indexing, and placement designs can produce different overheads. 
We model metadata overhead according to Debnath~\etal~\cite{chunkstash} and assume 64 bytes per emitted chunk (\ie, $\beta=64$): 20-byte SHA-1 fingerprints with an additional 44 bytes to contain chunk length and location.

\subsection{Results and Key Observations}
\label{sec:dedup:results}

Deduplication ratios generally decline as
the target chunk size grows (\Cref{fig:dedup_overview}). Rabin attains the highest ratios and the flattest
curves, but this mostly reflects its unresponsiveness to the configured target.
On DB its mean chunk size grows only from \qty{95}{B} to \qty{107}{B} as $\mu$ increases
from \qty{512}{B} to \qty{8}{KiB}, and on CODE it reaches just \qty{517}{B} at $\mu = 8192$
(\Cref{tab:csd_full}).
AE, RAM, and MII degrade faster because they reach larger targets by
scaling a horizon or window (e.g., $h = 7936$ for AE at \qty{8}{KiB}). This makes
cut-points depend on longer stretches of content, so we expect the trend to
continue beyond the tested range. MII's curves change in a single step, since
only $w = 6$ and $w = 7$ fall within our target range.
FSC performs well because every target is a multiple of the 512-byte TAR
block~\cite{gnu135Basic}, and VM images are similarly block-organized. Fixed
boundaries therefore often align with unchanged blocks across versions. RAM and
PCI fail on CODE, where zero padding in block-aligned archives suppresses
cut-points and inflates their mean chunk sizes to tens or hundreds of kilobytes.

Plotting metadata-adjusted storage savings against the empirical mean chunk size removes Rabin's apparent advantage (\Cref{fig:dedup_pareto}). 
At \qty{64}{B} per chunk, metadata alone costs 60--67\,\% of the original
size on DB, lowering Rabin's worst-case savings to around \qty{30}{\%}. 
RAM and PCI fail in the opposite direction: their oversized chunks miss most redundancy, with storage savings very close to 0 on CODE. 
With the exception of FSC on DB, AE achieves both the highest best-case savings and the
highest worst-case savings. 
Raw deduplication ratios thus favor algorithms that silently produce small chunks.
Once granularity and metadata are accounted for, AE offers the most favorable
and stable tradeoff, with Gear as the strongest hash-based alternative.

\section{Discussion}\label{sec:discussion}
We synthesize our results into guidance for selecting base chunking algorithms within the tested workloads and target range.
Our comparisons isolate the base chunking algorithm; they do not rank composite algorithms or end-to-end deduplication systems, whose behavior also depends on size thresholds, normalization, caching, fingerprinting, and storage organization.

\subsection{Summary of Results}

Our benchmark indicates that traditional hash-based algorithms, such as Rabin and Gear, still offer competitive deduplication efficacy, even when accounting for achieved mean chunk sizes.
However, we find that AE attains similar or better performance while closely adhering to the configured target chunk size and exhibiting low chunk-size variance.
Our experiments show that, depending on the structure of the dataset, Rabin can produce pathologically small chunks. These yield impressive deduplication ratios, but once metadata overhead is accounted for, a different picture emerges (see \Cref{fig:dedup_pareto:vmb,fig:dedup_pareto:db}).
We also discussed in \Cref{sec:dedup:results} that AE's deduplication may degrade when operating with target chunk sizes larger than those considered in our experiments, owing to the scaling-window problem. In those settings, Gear may represent a more robust alternative. Moreover, given the wide range of higher-level techniques that build upon hash-based algorithms, Gear remains an appealing choice for a variety of applications.
RAM, which has the highest measured throughput on RAND at \qty{2}{KiB}, proves sensitive to structured input:
On highly compressible data, such as the source-code archives in CODE,
the algorithm fails to find chunk boundaries, resulting in the generation of pathologically large chunks.
MII and PCI were proposed specifically for data synchronization applications, where chunk-size variance is of lesser importance.
Nevertheless,
in our storage-oriented evaluation, we do not observe clear advantages over alternatives such as AE or Gear,
which offer comparable or higher deduplication and throughput.
Likewise, the more recently published Seq\-CDC shows no benefit in our evaluated settings over, for instance, Gear or AE.

\Cref{tab:summary} summarizes results at $\mu=2\,\mathrm{KiB}$ with two measures of chunk-size calibration and variability:
``Max. Err.'' measures the maximum error $\frac{|\bar{c}-\mu|}{\mu}$ across datasets;
``Max. CV'' measures the maximum coefficient of variation $\frac{s}{\bar{c}}$ across datasets.

\begin{table}[tb]
    \centering
	\caption{Summary of key success metrics, based on experiments configured at $\mu=2\,\text{KiB}$. Superscripts identify every dataset attaining the extremum: C = CODE, W = WEB, V = VMB, D = DB. Bold: best; underline: second best.}
	\begingroup
\setlength{\tabcolsep}{4pt}
\begin{tabular}{lrrrrr}
\toprule
Algorithm & Throughput & \multicolumn{2}{c}{Chunk Sizes} & \multicolumn{2}{c}{Storage Savings} \\
\cmidrule(lr){3-4}\cmidrule(lr){5-6}
 & (MiB/s) $\uparrow$ & Max. Err. $\downarrow$ & Max. CV $\downarrow$ & Min. $\uparrow$ & Max. $\uparrow$ \\
\midrule
Rabin & 195 & 0.95\textsuperscript{D} & \underline{5.45}\textsuperscript{D} & 0.33\textsuperscript{D} & \underline{0.85}\textsuperscript{W} \\
Buzhash & 311 & 0.69\textsuperscript{C} & 30.59\textsuperscript{D} & 0.60\textsuperscript{C} & 0.85\textsuperscript{W} \\
Gear & \underline{652} & \underline{0.47}\textsuperscript{D} & 30.79\textsuperscript{D} & \underline{0.66}\textsuperscript{C} & 0.84\textsuperscript{W} \\
AE & 650 & \textbf{0.12}\textsuperscript{V} & \textbf{0.31}\textsuperscript{V} & \textbf{0.71}\textsuperscript{C} & \textbf{0.89}\textsuperscript{D} \\
RAM & \textbf{886} & 58.04\textsuperscript{C} & 57.81\textsuperscript{C} & 0.01\textsuperscript{C} & 0.79\textsuperscript{D} \\
PCI & 242 & 10.03\textsuperscript{C} & 195.44\textsuperscript{D} & 0.09\textsuperscript{C} & 0.74\textsuperscript{W} \\
MII & 209 & 0.84\textsuperscript{C} & 46.33\textsuperscript{D} & 0.54\textsuperscript{C} & 0.81\textsuperscript{W} \\
SeqCDC & 595 & 0.82\textsuperscript{C} & 34.11\textsuperscript{D} & 0.53\textsuperscript{C} & 0.83\textsuperscript{W} \\
\bottomrule
\end{tabular}
\endgroup

	\label{tab:summary}
\end{table}

\subsection{Contrasting Results with Literature}
\label{sec:discussion:contrast}

While our study confirms many of the claims made by the original algorithms' developers,
we also encounter several discrepancies.
These differences arise from factors such as dataset characteristics,
experimental setups, and implementation details.
We therefore contrast our results with those from the literature and reflect on their implications.
 
\paragraph*{AE}
As shown in \cref{sec:analysis}, the formula by which the parameterization in AE is derived
does not reflect the actual behavior of the algorithm fully,
but has been disseminated in this form throughout several studies~\cite{ae,ae2,ellappan2021dynamic,fastcdc2,jin2023accelerating,ellappan2023smart},
and moreover, the open-source chunking algorithm evaluation platforms Destor~\cite{fu2015design} and DedupBench~\cite{liu2023dedupbench}.
Following our report, DedupBench corrected its AE parameterization~\cite{dedupbenchIssue}.
According to our own experiments on RAND, the previously established formula renders chunks \qty{13}{\%} smaller than the target when it is set to \qty{512}{B}, 
and \qty{39}{\%} smaller with a target of \qty{8}{KiB}.
Because smaller chunks can improve deduplication ratios while increasing chunk and metadata counts, earlier AE comparisons must be interpreted with this size shift in mind.
The 13--39\,\% range quantifies the mean-size deviation in our RAND experiments, not a bound on deduplication gains across other workloads.
If a reported advantage could be explained by the size shift, the comparison remains unresolved without a re-evaluation controlling for empirical mean chunk size. We do not claim a ranking reversal: reproducing the affected studies would require their datasets, minimum/maximum constraints, fingerprinting, and other configuration details, and is outside our scope.
In the original paper~\cite{ae}, the authors suggest that AE is superior to hash-based algorithms in terms of deduplication.
While AE shows superior performance in some of our experiments and is competitive in most,
its performance sometimes degrades with higher target chunk sizes.

\paragraph*{RAM}
Our results show that RAM performs poorly on low-entropy datasets, most notably in the case of CODE (see \Cref{tab:csd_full}).
The authors of RAM notice this flaw~\cite{ram}; however, the full extent of this shortcoming was not evident from prior experiments.
In their study, the authors evaluate RAM on high-entropy datasets that may not represent the lower-entropy inputs often encountered in data deduplication workloads.
Indeed, deduplication is only evaluated in the form of \enquote{bytes saved per second},
which counterintuitively conflates throughput with deduplication.
Moreover, the paper suggests chunk-size variances similar to AE, while none of our experiments on real-world datasets support this statement.
Recently, Al~Jarah~\etal~\cite{udayashankar2024impact} evaluate the impact of low entropy with synthetic datasets.
While imposing a maximum chunk size constraint, the authors quantify an algorithm's robustness to low-entropy data by measuring the rate of artificial boundaries.
Interestingly, in their study, RAM exhibits better robustness than Rabin or FastCDC, which stands in contrast to our results.
This divergence may be explained by their reliance on synthetic datasets and the maximum chunk size constraint.

\section{Related Work}\label{sec:relatedWork}
Previous works have studied the state-of-the-art of data deduplication and chunking algorithms~\cite{xia2016comprehensive,goel2023detailed,fu2025distributed}.
These studies primarily focus on theoretical discussions based on the original works that introduced the algorithms.
Liu~\etal~\cite{liu2023dedupbench} experimentally evaluate FSC, AE, Gear, and Rabin.
Interestingly, their benchmark considers two AE configurations,
with chunk boundaries determined by either local maxima (AE-Max) or minima (AE-Min).
AE-Min yields slightly higher deduplication efficacy than AE-Max.
However, the authors only evaluate on a single dataset based on VM images.
We argue that performance differences (on any metric) between AE-Max and AE-Min are
most likely
based on the dataset's data distribution.
We further note that their evaluation relied on an incorrect formula for the AE window size,
resulting in an average chunk size nearly half of the expected value (see \Cref{subsubsec:setup:parameters:ae}).
Most other experimental evaluations and comparisons of chunking algorithms are part of works introducing new algorithms~\cite{ae,ram,pci,ellappan2021dynamic}.
These works do not provide a consistent set of datasets or comparable evaluation methodologies for judging the algorithms against one another.
For example, Manogar and Abirami~\cite{ellappan2021dynamic} present the superior throughput of their algorithm.
In their measurements, however, they do not isolate the effects of the chunking algorithm
from the computationally expensive SHA-1 fingerprinting of the produced chunks.
Consequently, the resulting throughput is heavily skewed in favor of algorithms that generate a lower number of chunks.
Notably, their own algorithm produces the fewest chunks across all evaluated algorithms and datasets.
The authors of PCI~\cite{pci} compare their algorithm against Rabin, LMC, AE, RAM, and MII.
The datasets used in the experiments are artificial, based on sequences of zeros with random byte insertions or deletions in defined intervals.
In practice, chunking algorithms are applied to real-world data, which is not reflected in the datasets used by the authors.
For instance, the issue of low-entropy strings is disregarded within this experiment.
Recently, Al Jarah~\etal~\cite{udayashankar2024impact} benchmark AE, FastCDC, Rabin, and RAM on synthetic datasets simulating low-entropy data.
Interestingly, using our methodology, we reach different conclusions, which we briefly discuss in \Cref{sec:discussion:contrast}.
%
Viji and Revathy~\cite{viji2021comparative} compare specifically the throughput of Rabin, LMC, AE, RAM, and PCI on random datasets.
In their experiment, RAM significantly outperforms the other algorithms,
followed by AE and PCI;
our experimental study confirms this ordering on random data.

\section{Conclusion}\label{sec:conclusion}
We showed that CDC algorithms are best understood and evaluated modularly: a base algorithm supplies the cut-point rule, while high-level techniques such as normalization, size thresholds, caching, and skipping change its operating envelope. By isolating this base layer, benchmarking eight algorithms on realistic backup-oriented datasets, and calibrating parameters on uniformly random data before freezing them for realistic workloads, we expose effects that prior composite evaluations obscure. We correct a widely propagated AE parameterization error, derive tuning rules for RAM and MII, and show that behavior on uniformly random data does not necessarily transfer to realistic workloads. Within the evaluated workloads and target range, AE is the most robust of the tested base chunking algorithms, Gear remains a strong hash-based choice, RAM is fast but fragile on low-entropy inputs, and Buzhash, MII, PCI, and SeqCDC offer limited advantages for storage-oriented deduplication.
We hope that our framework supports future advances in \ac{CDC} algorithms and enables more systematic evaluations of high-level techniques under the same modular discipline.

\section*{Acknowledgments}

The authors thank Asterios Katsifodimos for his valuable feedback and support. 
This work has been partly funded by the Dutch National NWO/TKI Science Grant BLOCK.2019.004 and the Federal Ministry of Research, Technology and Space of Germany within the project ``Open6GHub+'' (grant number: 16KIS2407).
Generative AI tools were used to improve the linguistic clarity and quality of the manuscript and to assist in writing R code for visualizing existing data.
No scientific content, software, results, or conclusions were generated using AI. 
All content was reviewed and verified by the authors.

%

\bibliographystyle{IEEEtran}
\bibliography{ref}

\clearpage
\appendices
\crefalias{section}{appendix}
\crefalias{subsection}{appendix}

\section{Stochastic Analysis of AE}\label{app:ae}
\begin{definition}
	In AE, $i+h$ marks a cut-point if it is the first position for which $B_i\geq\max\{B_j\}_{j=i+1}^{i+h}$.
\end{definition}

In order for any byte $B_i$ at position $i$ to become subject to the matching condition,
all previous bytes $B_{j<i}$ must have failed the matching condition $B_j\geq\max\{B_k\}_{k=j+1}^{j+h}$.
However, these are not independent events.
The event $B_{i-1}<\max\{B_{j}\}_{j=i}^{i+h-1}$ influences the expectation of $B_{i},\ldots,B_{i+h}$, and therefore affects the probability $B_i<\max\{B_j\}_{j=i+1}^{i+h}$.
So does every previous condition back to $B_{i-h-1}<\max\{B_j\}_{j=i-h}^i$, as it influences the expected value of $B_i$.
This leads to a \textbf{chain of recursive conditional probabilities} that we express as

\begin{equation*}
\begin{split}
    \mathrm{P}(i+h\text{ marks cut-point})= \mathrm{P}(B_i\geq \max\{B_j\}_{j=i+1}^{i+h}\mid \\ 
    B_{i-1}<\max\{B_j\}_{j=i}^{i+h-1}\land \ldots\land
    B_{i-h-1}<\max\{B_j\}_{j=i-h}^{i})\cdot \\
    \ldots\cdot\mathrm{P}(B_2<\max\{B_j\}_{j=3}^{h+2}\mid
     B_1<\max\{B_j\}_{j=2}^{h+1})\cdot\\\mathrm{P}(B_1<\max\{B_j\}_{j=2}^{h+1}).
\end{split}
\end{equation*}

For the purpose of readability, we introduce the abbreviation $M_h := \max\{B_i\}_{i=1}^h$ to denote the maximum value among $h$ bytes. 
As we are assuming uniformly distributed random data, and bytes are discrete variables in $[0,255]$,
the probability for any byte $X$ to take any value $0\leq x\leq 255$ is $P(X=x)=1/256$.
Furthermore, $P(X<x)=\frac{x}{256}$, $P(X\geq x)=\frac{256-x}{256}$, $P(X<Y)=\frac{1}{2}$, and $P(X\leq Y)=\frac{257}{512}$.

\begin{lemma}
	The cumulative distribution function $F_M(m)=\mathrm{P}(\max\{B_i\}_{i=1}^h\leq m)$ follows
	\begin{equation*}
		F_{M_h}(m)=\mathrm{P}(B_1\leq m\land\ldots B_h\leq m)=\left(\frac{m+1}{256}\right)^h
	\end{equation*}
	
	for $0\leq m\leq 255$
	and can be used to express the probability mass function of $M_h$:

	\begin{equation}
	\begin{split}
    	\mathrm{P}(M_h = m) = F_{M_h}(m)-F_{M_h}(m-1)\\=\left(\frac{m+1}{256}\right)^h-\left(\frac{m}{256}\right)^h
	\end{split}
	    \label{eq:ae_mh_m}
	\end{equation}

	for $0\leq m\leq 255$.
	Further, a lower bound $x$ changes this distribution to

	\begin{equation*}
    	\mathrm{P}(M_h=m\mid M_h\geq x) = \left(\frac{m-x+1}{256-x}\right)^h-\left(\frac{m-x}{256-x}\right)^h
	\label{eq:eight}
	\end{equation*}
	
	for $0 \leq m \leq 255$ and $0 \leq x \leq 255$.
\end{lemma}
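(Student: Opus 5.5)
The plan has three steps, one for each claim of the lemma. All three rest on the modelling assumption stated just before it: $B_1,\dots,B_h$ are independent and uniform on $\{0,\dots,255\}$, so $\mathrm{P}(B_i\leq m)=\frac{m+1}{256}$ for $0\leq m\leq 255$.

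\textbf{Step 1: the CDF.} The event $\max\{B_i\}_{i=1}^h\leq m$ holds exactly when $B_1\leq m\land\ldots\land B_h\leq m$. By independence its probability factorises into $\prod_{i=1}^h\mathrm{P}(B_i\leq m)=\left(\frac{m+1}{256}\right)^h$.

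\textbf{Step 2: the probability mass function.} Since $M_h$ is integer-valued, $\{M_h=m\}=\{M_h\leq m\}\setminus\{M_h\leq m-1\}$, so $\mathrm{P}(M_h=m)=F_{M_h}(m)-F_{M_h}(m-1)$. Substituting Step 1 gives \Cref{eq:ae_mh_m}. The boundary case $m=0$ needs a separate check: the formula for $F_{M_h}(-1)$ evaluates to $0^h=0$, which is indeed the correct value of the CDF there.

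\textbf{Step 3: the conditioned distribution.} This is the step I expect to be the main obstacle, because the stated formula is not literally $\mathrm{P}(M_h=m)/\mathrm{P}(M_h\geq x)$.
\begin{itemize}
\item The right-hand side is exactly the pmf of the maximum of $h$ independent bytes that are each uniform on $\{x,\dots,255\}$, a set of $256-x$ values.
\item I would therefore read the ``lower bound $x$'' as the condition that every byte in the horizon satisfies $B_i\geq x$. This is how such a bound arises in the recursive chain above, where a failed comparison forces the window bytes upward.
\item Under that reading, each $B_i$ conditioned on $B_i\geq x$ is uniform on $\{x,\dots,255\}$, and the bytes remain independent because the conditioning event is a product event.
\item Shifting $B_i'=B_i-x$ makes each $B_i'$ uniform on $\{0,\dots,255-x\}$. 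Repeating Steps 1 and 2 with $256$ replaced by $256-x$ and $m$ by $m-x$ yields $\left(\frac{m-x+1}{256-x}\right)^h-\left(\frac{m-x}{256-x}\right)^h$.
\end{itemize}

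Two points need care in the write-up of Step 3.
\begin{itemize}
\item \emph{Range of $m$.} The formula is valid only for $x\leq m\leq 255$; for $m<x$ the probability is $0$. The stated range $0\leq m\leq 255$ should be restricted accordingly, or the formula should be interpreted as zero when $m<x$.
\item \emph{Which conditioning is meant.} If the literal event $\{M_h\geq x\}$ is intended, the correct expression is instead $\mathrm{P}(M_h=m)\big/\bigl(1-(x/256)^h\bigr)$ for $m\geq x$. I would state this alternative explicitly so that the conditioning assumption used in the later AE derivation is unambiguous.
\end{itemize}
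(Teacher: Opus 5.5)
Your Steps 1 and 2 match the paper. Its whole argument is the chain of equalities written into the statement: independence factorises $\mathrm{P}(B_1\leq m\land\ldots\land B_h\leq m)$, and the pmf is a difference of consecutive CDF values. The paper gives no separate proof and no derivation at all for the conditioned formula.

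Your diagnosis of Step 3 is right: the stated formula is false for the literal event $\{M_h\geq x\}$. Take $h=2$ and $x=m=254$. The literal conditional probability is $\frac{509}{1020}\approx 0.499$, but the stated formula gives $\frac{1}{4}$. For $m<x$ the expression does not even vanish: with $h=1$, $m=253$, $x=254$ it equals $\frac{1}{2}$. Your restriction-and-shift proof of the per-byte version ($B_i\geq x$ for all $i$, and $x\leq m\leq 255$) is sound. So is your formula $\mathrm{P}(M_h=m)/\bigl(1-(x/256)^h\bigr)$ for $m\geq x$ under the literal reading.

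One correction: your reason for preferring the per-byte reading does not hold. In the AE recursion, a failed comparison $B_j<\max\{B_k\}_{k=j+1}^{j+h}$ with $B_j=y$ only forces the window \emph{maximum} above $y$, not every byte in the window. The next lemma in the appendix indeed conditions on the event $\max\{B_i\}_{i=3}^{h+1}>x$. So the downstream analysis needs your ``alternative'' formula, not the stated one.

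The literal reading is also the only one consistent with the paper's later remark $\mathrm{P}(M_h=m\mid M_h\geq x)\geq\mathrm{P}(M_h=m)$, which holds for $m\geq x$. Under the per-byte reading that remark fails: $h=2$, $x=m=1$ gives $\frac{1}{65025}<\frac{3}{65536}$. In your write-up, present the literal formula as the one to use for the AE analysis. Present the stated expression only as the pmf under the stronger per-byte conditioning.
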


\begin{figure}[h]
    \centering
    \input{figures/ae_pmf.tex}
    \caption{Probability mass function for $M_h$. Note, $m$ is discrete; line plots were chosen for readability. As $h$ increases, the expected value for $M_h$ converges to 255.}
    \label{fig:ae_pmf}
\end{figure}

Intuitively, as $h$ increases, the probability of observing small values of $M_h$ diminishes, since more bytes contribute to the maximum.
We illustrate this in \cref{fig:ae_pmf}. 
Further, $\lim_{h\to\infty} E(M_h) = 255$.
For instance, with $h=2000$, $\mathrm{P}(M_h=255)>0.999$.
That is, $\mathrm{P}(M_h=m\mid M_h\geq x)\geq\mathrm{P}(M_h=m)$ for any $0\leq x\leq 255$.

The practical implication of this is that when AE is used with target chunk sizes beyond \qty{2}{KB},
the approximation that $M_h\approx 255$ can be used to simplify the problem to

\begin{equation*}
    \mathrm{P}(i+h\text{ is cut-point}) = \mathrm{P}(B_i=255) = \frac{1}{256}.
\end{equation*}

On average, therefore, a chunk cut-point will be found after the evaluation of 256 bytes.
Hence, target chunk size for AE is $\mu\approx h+256$ for large values of $h$;
the parameterization is determined by $h\approx\mu-256$.
We continue with extended calculations for small target chunk sizes.

The probability for a random byte to fail the matching condition is

\begin{equation*}
    \mathrm{P}(X < \max\{B_i\}_{i=1}^{h}) = \sum_{m=0}^{255} \mathrm{P}(X < m)\cdot\mathrm{P}(\max\{B_i\}_{i=1}^{h}=m).
\end{equation*}

The probability in this form would apply to the first byte of the sequence, \ie, it is equivalent to $\mathrm{P}(B_1<\max\{B_i\}_{i=2}^{h+2})$.

The second byte $B_2$, which only gets evaluated under the pre-condition that $B_1<\max\{B_i\}_{i=2}^{h+1}$, faces the same condition as its precursor, but has a different probability as it is conditioned:

\begin{equation}
\begin{split}
    \mathrm{P}(B_2<\max\{B_i\}_{i=3}^{h+2}\mid B_1<\max\{B_i\}_{i=2}^{h+1})\\
    =\sum_{x=0}^{255} \mathrm{P}(B_2=x\mid B_1<\max\{B_i\}_{i=2}^{h+1})\cdot \\
    \mathrm{P}(x<\max\{B_i\}_{i=3}^{h+2}\mid B_1<\max\{B_i\}_{i=2}^{h+1})\\
    =\sum_{x=0}^{255} \Bigg( \mathrm{P}(B_2=x\mid B_1<\max\{B_i\}_{i=2}^{h+1}) \cdot \\
 \sum_{m=x+1}^{255} \mathrm{P}(\max\{B_i\}_{i=3}^{h+2}=m\mid B_1<\max\{B_i\}_{i=2}^{h+1}) \Bigg)
\end{split}
\label{eq:seven}
\end{equation}

The expectation of $B_2$ and the expectation of $\max\{B_i\}_{i=3}^{h+2}$ are shifted towards higher values, given that $B_1<\max\{B_i\}_{i=2}^{h+1}$.

The two conditional probabilities needed to solve \eqref{eq:seven} are less trivial. 
We are deriving $\mathrm{P}(B_2=x\mid B_1<\max\{B_i\}_{i=2}^{h+1})$ using Bayes' formula, which transforms the expression to

\begin{equation}
\frac{\mathrm{P}(B_1<\max\{B_i\}_{i=2}^{h+1}\mid B_2=x)\cdot\mathrm{P}(B_2=x)}{\mathrm{P}(B_1<\max\{B_i\}_{i=2}^{h+1})}.
\end{equation}

We derive the inversed conditional probability $\mathrm{P}(B_1<\max\{B_i\}_{i=2}^{h+1}\mid B_2=x)$ in the following lemma.

\begin{lemma}
The probability of a byte $B_1$ being smaller than the maximum value among the $h$ bytes in its horizon, given that one of the bytes $B_2$ is known, can be expressed as

	\begin{equation}
	\begin{split}
	    \mathrm{P}(x\geq \max\{B_i\}_{i=3}^{h+1})\cdot\mathrm{P}(B_1<x) \;\oplus\;
 	   \mathrm{P}(\max\{B_i\}_{i=3}^{h+1}>x) \\ 
 	   \cdot\mathrm{P}(B_1<\max\{B_i\}_{i=3}^{h+1})\mid \max\{B_i\}_{i=3}^{h+1}>x)
	\end{split}
	\end{equation}
	\label{lemma:ae_prob1}
\end{lemma}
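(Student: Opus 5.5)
We condition on the value of the remaining horizon maximum $M' := \max\{B_i\}_{i=3}^{h+1}$, the maximum of the $h-1$ horizon bytes other than $B_2$. Once $B_2=x$ is fixed, the horizon maximum of $B_1$ equals $\max\{x, M'\}$. The random quantities $B_1$ and $M'$ are functions of disjoint sets of bytes. Since the bytes are IID, they are independent of each other and of $B_2$. Conditioning on $B_2=x$ therefore leaves their joint law unchanged.

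The first step is to partition the sample space into the two disjoint events $\{M'\leq x\}$ and $\{M'>x\}$ and apply the law of total probability:
\begin{equation*}
\mathrm{P}(B_1<\max\{x,M'\}\mid B_2=x)=\mathrm{P}(M'\leq x)\,\mathrm{P}(B_1<x\mid M'\leq x)+\mathrm{P}(M'>x)\,\mathrm{P}(B_1<M'\mid M'>x).
\end{equation*}
We read the symbol $\oplus$ in the statement as this sum over disjoint cases.

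On the first event the horizon maximum is exactly $x$. Independence of $B_1$ and $M'$ gives $\mathrm{P}(B_1<x\mid M'\leq x)=\mathrm{P}(B_1<x)=x/256$, which produces the first term of the statement. On the second event the horizon maximum is $M'$, which yields the second, still conditional, term. Both factors $\mathrm{P}(M'\leq x)=F_{M_{h-1}}(x)$ and $\mathrm{P}(M'>x)$ are available in closed form from the preceding lemma, with $h$ replaced by $h-1$.

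The last term needs an explicit form so that it can be used in \eqref{eq:seven}. We expand it as
\begin{equation*}
\mathrm{P}(B_1<M'\mid M'>x)=\sum_{m=x+1}^{255}\frac{m}{256}\,\mathrm{P}(M_{h-1}=m\mid M_{h-1}\geq x+1),
\end{equation*}
using the conditional mass function from the preceding lemma. This expansion is the step to be careful with. The lemma states the lower-bounded pmf through a shifted formula, whereas the direct computation gives $\mathrm{P}(M_{h-1}=m)/\mathrm{P}(M_{h-1}>x)$ via \eqref{eq:ae_mh_m}. We would use the direct ratio, which is unambiguous, and check it against the lemma's expression before substituting. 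Beyond that, the argument involves only bookkeeping of the index ranges (the horizon of $B_1$ is $B_2,\dots,B_{h+1}$) and the boundary cases $x=0$ and $x=255$.
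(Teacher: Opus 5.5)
Your proof is correct and follows the same route as the paper, which splits on whether $x$ or $M'=\max\{B_i\}_{i=3}^{h+1}$ is the horizon maximum; you make that split rigorous with the law of total probability and the independence of $B_1$, $B_2$, and $M'$, which the paper leaves implicit. Your caution about the explicit expansion is justified: the paper's shifted formula for $\mathrm{P}(M_h=m\mid M_h\geq x)$ is really the law of the maximum of $h$ bytes each conditioned to be at least $x$ (for $h=2$ and $x=m=254$ it gives $1/4$ instead of $509/1020$), so your check would fail and the direct ratio $\mathrm{P}(M_{h-1}=m)/\mathrm{P}(M_{h-1}>x)$ is the one to use.
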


\begin{proof}
	Given $B_2=x$, we know that $\max\{B_i\}_{i=2}^{h+1}\geq x$. That is, either $x$ is the maximum in this horizon and $x\geq \max\{B_i\}_{i=3}^{h+1}$, or the remaining bytes are $\max\{B_i\}_{i=3}^{h+1}>x$. 
\end{proof}

\begin{lemma}
	The probability distribution for the maximum value among $h$ bytes, given that the preceding byte is smaller than the maximum value in its horizon, 
	can be expressed as 
	
	$$
	\mathrm{P}(\max\{B_i\}_{i=3}^{h+2}=m\mid B_1<\max\{B_i\}_{i=2}^{h+1}),
	$$
	
	which can be evaluated as
	
	\begin{equation}
	\begin{split}
	\mathrm{P}(B_3=m \land B_j\leq m\;\forall j\in[4,h+1] \mid B_1<\max\{B_i\}_{i=2}^{h+1}) \cdot \\ 
	(h-1)\cdot
	    \mathrm{P}(B_{h+3}\leq m)+ \\
    	\mathrm{P}(B_{h+3}=m)\cdot\mathrm{P}(B_3\leq m\mid B_1<\max\{B_i\}_{i=2}^{h+1})^{h-1}
	\end{split}
	\end{equation}
	\label{lemma:ae_prob}
\end{lemma}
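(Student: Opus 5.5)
We argue by decomposing on which byte attains the maximum $\max\{B_i\}_{i=3}^{h+2}=m$, mirroring the proof of \Cref{lemma:ae_prob1}. The window $\{3,\ldots,h+2\}$ overlaps the window $\{2,\ldots,h+1\}$ of the conditioning event in the bytes $B_3,\ldots,B_{h+1}$. It has one new byte, $B_{h+2}$, which the statement indexes as $B_{h+3}$; we read this as the byte just beyond the shared block. This new byte is independent of $B_1$ and $B_2$, so conditioning on $B_1<\max\{B_i\}_{i=2}^{h+1}$ leaves its distribution unchanged: $\mathrm{P}(B_{h+2}\leq m)=\frac{m+1}{256}$ and $\mathrm{P}(B_{h+2}=m)=\frac{1}{256}$. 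The plan is therefore to split the event $\{\max=m\}$ into two cases. In the first, the maximum is attained inside the shared block $B_3,\ldots,B_{h+1}$ and the new byte is at most $m$. In the second, the new byte equals $m$ and every byte in the shared block is at most $m$.

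For the first case we use the symmetry of the conditioning event in $B_3,\ldots,B_{h+1}$. The event $B_1<\max\{B_i\}_{i=2}^{h+1}$ is invariant under permuting these bytes, so the conditional probability that the maximum sits at a given position equals that for $B_3$. This is where the factor $(h-1)$ enters: it counts the position of the maximiser, giving the term $(h-1)\,\mathrm{P}(B_3=m\land B_j\leq m\ \forall j\in[4,h+1]\mid\cdot)\,\mathrm{P}(B_{h+2}\leq m)$. The factorisation of $\mathrm{P}(B_{h+2}\leq m)$ out of the conditional probability is justified by independence of the new byte from everything in the condition.

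For the second case we need $\mathrm{P}(B_3\leq m,\ldots,B_{h+1}\leq m\mid\cdot)$, which the statement writes as $\mathrm{P}(B_3\leq m\mid\cdot)^{h-1}$. We would justify this product form as the step under the independence-of-coordinates approximation. Strictly, conditioning on a maximum event correlates the bytes, so we would either state it as an approximation or verify it by exact computation. The exact computation conditions on $B_1=y$: then the event becomes $\{\max(B_2,\ldots,B_{h+1})>y\}$, and $\mathrm{P}(\text{all}\leq m,\ \max>y)=\left(\frac{m+1}{256}\right)^{h}-\left(\frac{\min(m,y)+1}{256}\right)^{h}$, up to the $B_2$ factor, after which we sum over $y$ and divide by $\mathrm{P}(B_1<M_h)$ using \eqref{eq:ae_mh_m}.

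The main obstacle is exactly this point, together with the overcounting in the first case. If several bytes of the shared block tie at $m$, or if $B_{h+2}=m$ as well, the multiplicity $(h-1)$ and the second term double-count. We would first try to recast the cases as ``the last index achieving $m$'' to make the decomposition disjoint, then show that the stated expression agrees up to these tie terms. These terms vanish in the continuous limit, which is the sense in which we expect the formula to hold.
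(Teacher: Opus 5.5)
Your decomposition is the paper's own: the maximum sits either in the shared block $B_3,\ldots,B_{h+1}$ or in the new byte, with everything else at most $m$. Your exchangeability argument for the factor $h-1$ is correct, and so is pulling $\mathrm{P}(B_{h+2}\le m)$ out by independence. Your reading of $B_{h+3}$ as $B_{h+2}$ is right, and $h-1$ is the correct count; the paper's proof says $h-2$.

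\textbf{The gap.} Write $E=\{B_1<\max\{B_i\}_{i=2}^{h+1}\}$. The step $\mathrm{P}(B_3\le m,\ldots,B_{h+1}\le m\mid E)=\mathrm{P}(B_3\le m\mid E)^{h-1}$ is false. An exact computation would refute it, not confirm it. It also does not become exact in the continuous limit: the tie terms vanish there, but this error does not.

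Take i.i.d.\ uniform bytes on $[0,1]$ and $h=3$, so $\mathrm{P}(E)=3/4$. Integrating over $B_1=b$ gives
\[
\mathrm{P}(B_3\le m,\,B_4\le m\mid E)=\tfrac{2m^2}{3}+\tfrac{m^4}{3}
\quad\text{and}\quad
\mathrm{P}(B_3\le m\mid E)=\tfrac{8m}{9}+\tfrac{m^4}{9}.
\]
At $m=1/2$ the first is $3/16=0.1875$, while the square of the second is $(65/144)^2\approx 0.204$. In this model the shared-block term of the formula is exact, because there are no ties and the bytes are exchangeable. So the whole discrepancy comes from the product step.

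Hence the displayed expression is not an identity in either the discrete or the continuous model. Neither branch of your plan can establish it: verifying the product fails, and agreement up to tie terms is false. The paper's sketch asserts the same product without justification, so it is a heuristic, not a proof.

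\textbf{What you can prove exactly.} This is simpler than the case split. Because $B_{h+2}$ is independent of $(B_1,\ldots,B_{h+1})$, the conditional CDF factorises. Let $G(m)=\mathrm{P}(\max\{B_i\}_{i=3}^{h+1}\le m\mid E)$ with $G(-1)=0$. Then
\[
\mathrm{P}\bigl(\max\{B_i\}_{i=3}^{h+2}=m\mid E\bigr)=\frac{m+1}{256}\,G(m)-\frac{m}{256}\,G(m-1).
\]
This disposes of all ties at once, so you need no \emph{last index} bookkeeping. Conditioning on $B_1=y$, as you began to do, gives
\[
\mathrm{P}(E)\,G(m)=\frac{1}{256}\sum_{y=0}^{255}\left[\left(\frac{m+1}{256}\right)^{h-1}-\frac{y+1}{256}\left(\frac{\min(m,y)+1}{256}\right)^{h-1}\right].
\]
Here $B_2$ is constrained only by $B_2\le y$ in the complementary event, not by $B_2\le m$. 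So your expression with exponent $h$ ``up to the $B_2$ factor'' should be replaced by this one. The lemma's formula can then be presented only as an approximation to this exact expression, not proved as an equality.
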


\begin{proof}
The expectation about a subset of bytes $\{B_i\}_{i=3}^{h+2} \cap \{B_i\}_{i=2}^{h+1} = \{B_i\}_{i=3}^{h+1}$ is influenced by the given condition,
\ie, all bytes in the target sequence except for its final byte $B_{h+2}$.
Intuitively, one byte will be $m$ while the remaining $h-1$ bytes of the intersecting horizon are $\leq m$.
By multiplying the chance for one of these bytes by $h-2$, we account for every byte in the intersecting horizon to be in place of the maximum value. 
In the formula, we used $B_3$ as an arbitrary placeholder, as any byte in the affected sequence yields identical chances.
Simultaneously, we also verify that the final (non-intersecting) byte $B_{h+3}\leq m$.
Finally, we account for the case that the final byte $B_{h+3}=m$ and all bytes in the conditioned sequence $\leq m$.
\end{proof}


Evaluating \eqref{eq:seven} using \cref{lemma:ae_prob1,lemma:ae_prob} 
accumulates six hierarchies of sum-loops. 
The complexity further depends on the size of $h$.
Even with low values for $h$, 
the evaluation turns out too computationally expensive (we interrupted the execution after two hours).
Therefore, we deem exact analytical solutions impractical.
We recommend an experimental approach for tuning $h$ for low target chunk sizes,
and making use of the approximation $h\approx\mu-256$ for target chunk sizes beyond \qty{2}{KB}.

\section{Stochastic Analysis of RAM}\label{app:ram}
\begin{definition}
Let $B_1,B_2,\ldots$ be independent bytes uniformly distributed on $\{0,\ldots,255\}$, and let $h\geq1$ be an integer.
RAM fixes $M_h=\max\{B_1,\ldots,B_h\}$ and cuts after the first subsequent byte at least as large as $M_h$.
Thus $T=\min\{t\geq1:B_{h+t}\geq M_h\}$ and the chunk length is $L=h+T$.
\end{definition}

Conditional on $M_h=m$, each subsequent byte succeeds with probability
$p_m=(256-m)/256$, independently of the preceding subsequent bytes.
Hence $T\mid M_h=m$ is geometric on $\{1,2,\ldots\}$, with
\begin{equation}
\mathrm{E}[T\mid M_h=m]=\frac{1}{p_m}=\frac{256}{256-m}.
\end{equation}
Averaging over the maximum distribution in \eqref{eq:ae_mh_m} gives a mixture of geometric waiting times:
\begin{equation}
\begin{split}
\mu(h)&=h+\mathrm{E}\bigl[\mathrm{E}[T\mid M_h]\bigr]\\
&=h+\sum_{m=0}^{255}\frac{256}{256-m}
\left[\left(\frac{m+1}{256}\right)^h
-\left(\frac{m}{256}\right)^h\right].
\end{split}
\end{equation}

The benchmark horizons were selected using the approximation in \Cref{eq:ram-approx}.
This approximation takes the reciprocal of the average success probability, rather than averaging the conditional waiting times.
By convexity, Jensen's inequality gives
$\mathrm{E}[1/p_{M_h}]\geq1/\mathrm{E}[p_{M_h}]$;
thus \eqref{eq:ram-approx} underestimates \eqref{eq:ram}.
For horizons 327 and 780 the differences are approximately \qty{32}{B} and \qty{6}{B}, respectively.

Solving \eqref{eq:ram-approx} and rounding gives the benchmark horizons $327,780,1792,3840,7936$ for targets $512,1024,2048,4096,8192$ bytes, respectively.
At the two smallest targets, both the exact-model predictions and the measured RAND means round to \qty{544}{B} and \qty{1030}{B}, respectively (\Cref{tab:csd_full}).
Minimizing $|\mu(h)-\mu|$ over positive integer horizons for each configured target $\mu$ instead selects 300 and 774 for the two smallest targets and the same benchmark horizons for the three larger targets; these are calculated parameter choices, not additional measurements.
Strictly increasing means permit binary search, with the smaller horizon selected on ties.

Finally, $\Pr[M_h=255]=1-(255/256)^h\to1$ implies $\mathrm{E}[T]\to256$ and $\mu(h)\approx h+256$.
At $h=1792$, this approximation differs from the exact mean by less than \qty{1}{B}.
Relating this IID model to measurements also requires accounting for implementation-specific boundary indexing and terminal chunks.

\end{document}